\theoremstyle{plain}
\newtheorem{theorem}{Theorem}[section]
\newtheorem{proposition}{Proposition}[section]
\newtheorem{lemma}{Lemma}[section]
\newtheorem{definition}{Definition}
\def\bs{\boldsymbol}
\def\mbb{\mathbb}
\def\bs{\boldsymbol}
\def\mbb{\mathbb}
\def\mcc{\mathcal{C}}
\def\mcx{\mathcal{X}}
\def\mcy{\mathcal{Y}}
\def\mcs{\mathcal{S}}
\def\mct{\mathcal{T}}
\def\dx{\delta_\mcx}
\def\dy{\delta_\mcy}
\def\costx{\kappa}
\def\costy{\sigma}
\newcommand{\ones}{\mathds{1}}
\def\bx{\boldsymbol{x}}
\def\btx{\textbf{x}}
\def\bty{\textbf{y}}
\def\by{\boldsymbol{y}}
\def\bu{\boldsymbol{u}}
\def\bX{\boldsymbol{X}}
\def\bY{\boldsymbol{Y}}
\def\bv{\boldsymbol{v}}
\def\E{\mathbb{E}}
\def\R{\mathbb{R}}
\def\F{\mathbb{F}}
\def\P{\mathbb{P}}
\def\Rp{\mathbb{R}_{\geq 0}}
\def\GL{\text{GL}}
\def\ML{\text{ML}}
\def\WL{\text{WL}}
\def\BS{\text{BS}}
\def\WC{\text{WC}}
\def\MLC{\text{ML-C}}
\def\MLM{\text{ML-M}}
\def\UB{\text{UB}}
\def\LB{\text{LB}}
\def\be{\begin{equation}}
\def\ee{\end{equation}}
\def\ba{\begin{aligned}}
\def\ea{\end{aligned}}
\def\ben{\begin{enumerate}}
\def\een{\end{enumerate}}
\def\bi{\begin{itemize}}
\def\ei{\end{itemize}}
\def\i{\item}
\def\v2{\vspace{2mm}}
\begin{document}

\title{Allocation of Heterogeneous Resources in General Lotto Games }

\author{Keith Paarporn,  Adel Aghajan,  Jason R. Marden 
\thanks{ K. Paarporn is with the Department of Computer Science at the University of Colorado, Colorado Springs. A. Aghajan is with the Department of Electrical and Computer Engineering at the University of California, San Diego. J. R. Marden is with the Department of Electrical and Computer Engineering at the University of California, Santa Barbara, CA. Contact: \texttt{kpaarpor@uccs.edu, adaghaja@ucsd.edu, jrmarden@ucsb.edu}. This work is supported by NSF grant \#ECCS-2346791, ONR grant \#N00014-20-1-2359, AFOSR grants \#FA9550-20-1-0054 and \#FA9550-21-1-0203, and the Army Research Lab through the ARL DCIST CRA \#W911NF-17-2-0181. This paper extends all results from a preliminary conference version \cite{aghajan2023equilibrium} to an arbitrary number of resource types. }
}

\maketitle

\begin{abstract}
    The allocation of resources plays an important role in the completion of system objectives and tasks, especially in the presence of strategic adversaries. Optimal allocation strategies are becoming increasingly more complex, given that multiple heterogeneous types of resources are at a system planner's disposal.  In this paper, we focus on deriving optimal strategies for the allocation of heterogeneous resources in a well-known competitive resource allocation model known as the General Lotto game. In standard formulations, outcomes are determined solely by the players' allocation strategies of a common, single type of resource across multiple contests. In particular, a player wins a contest if it sends more resources than the opponent. Here, we propose a multi-resource extension where the winner of a contest is now determined not only by the amount of resources allocated, but also by the composition of resource types that are allocated. We completely characterize the equilibrium payoffs and strategies for two distinct formulations. The first consists of a weakest-link/best-shot winning rule, and the second considers a winning rule based on a weighted linear combination of the allocated resources. We then consider a scenario where the resource types are costly to purchase, and derive the players' equilibrium investments in each of the resource types.
\end{abstract}

\section{Introduction}\label{sec:intro}

\begin{figure*} 
	\begin{center}
		\includegraphics[width=0.85\linewidth]{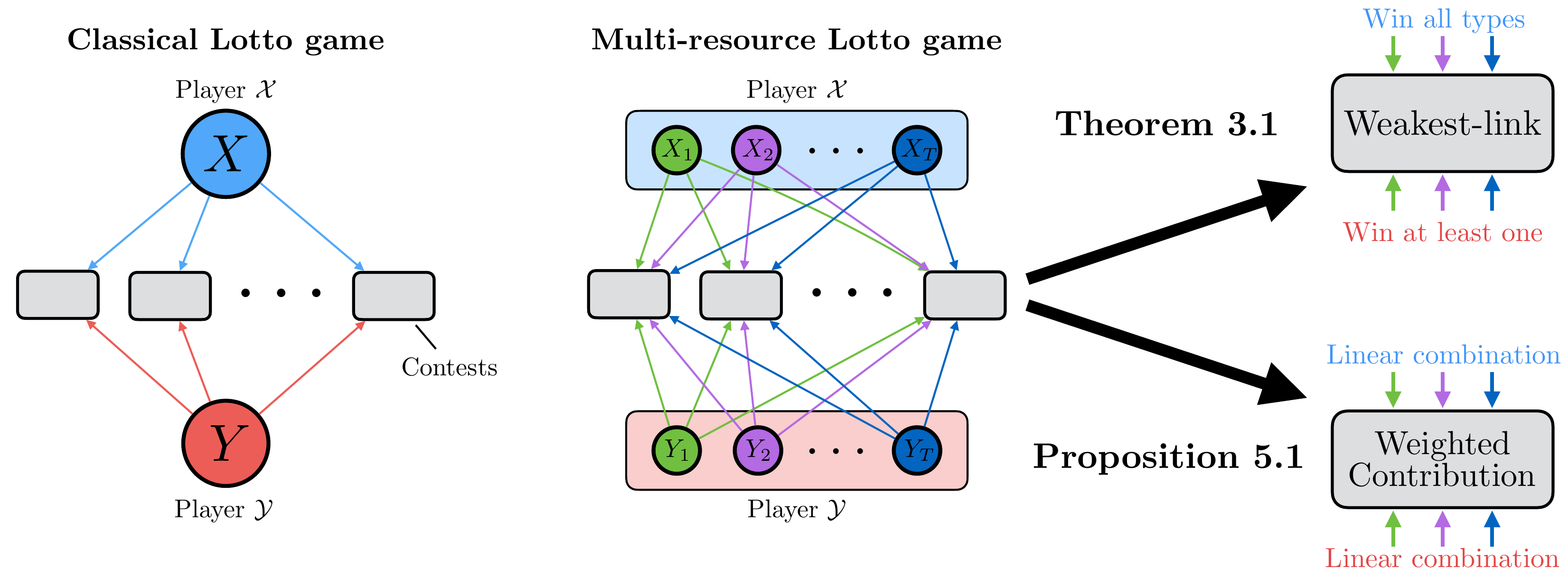} 
	\end{center}
	\caption{(Left) The classic General Lotto game, wherein a single resource type (e.g. money) is allocated to multiple simultaneous contests. Player $\mcx$ has a budget of $X \geq 0$ resources, and player $\mcy$ has a budget of $Y\geq 0$ resources. Success on a contest here simply depends on sending more resources than the opponent. (Right) The multi-resource General Lotto game. There are multiple resource types available to allocate (e.g. money, advertising, and human resources), where player $\mcx$ has budget $X_1\geq 0$ of type 1 resources, $X_2 \geq 0$ of type 2 resources, and so on. The success on each contest is now determined by a winning rule that depends on the combined allocation of resource types from both players. Our main contributions characterize equilibrium payoffs and strategies for two types of winning rules. Theorem \ref{thm:WL} considers the weakest-link rule, wherein player $\mcx$ needs to allocate more of every resource type to win a contest, whereas player $\mcy$ only needs to allocate more of only a single type of resource. Proposition \ref{thm:WC} considers a setting where each resource type has an associated weight, or effectiveness. A player wins the contest if the aggregate weighted amount of resources exceeds that of the opponent.}
	\label{fig:setup}
\end{figure*}

System planners are often responsible for allocating limited resources towards accomplishing multiple objectives, e.g. the allocation of autonomous agents to complete distributed tasks, or the allocation of security assets to prevent successful attacks against a network. The increasing complexity of these systems  require effective resource allocation strategies to become increasingly complex.  Part of this is the need to utilize multiple heterogeneous types of resources that are specialized for completing particular tasks. For example, multirobot systems require the allocation and coordination of agents with heterogeneous capabilities in order to complete multiple objectives \cite{Emam_2021,Notomista_2022,Banfi_2022}. Ensuring security in cyber-physical systems  requires the allocation of cybersecurity resources, physical security infrastructures, and human resources \cite{cardenas2009challenges,baheti2011cyber,mo2011cyber,jiang2018data,Ferdowsi_2020}. The successful completion of  objectives often cannot be accomplished through the allocation of only a single resource type. This poses new challenges as to how strategies for the allocation of heterogeneous resources should be conceived. Indeed, a system planner seeks to maximize performance by deploying multiple types of assets with different capabilities, effectiveness, and costs.

In this paper, we focus on deriving the optimal allocation of multiple, heterogeneous resources types in strategic adversarial environments. In particular, we propose an extended formulation of the General Lotto game, a popular model of competitive resource allocation between opponents \cite{Hart_2008,kovenock2021generalizations}. The General Lotto game is a variant of the original Colonel Blotto game, which models a strategic interaction where two budget-constrained players compete over a set of valuable contests, and the player that deploys more resources to a particular contest wins its associated value. In the standard formulations, the players have access only to a single common resource type, and the winner of each contest is determined purely by the amount of resources allocated to that contest -- this is known as the winner-take-all winning rule \cite{Kovenock_handbook_2012}. 

The primary literature on General Lotto and Colonel Blotto games almost exclusively considers the allocation of a single resource type \cite{Borel,Gross_1950,Roberson_2006,Golman_2009,Schwartz_2014,kovenock2021generalizations,Boix_2020}, e.g. only money, only UAVs, only troops, etc. From an applications standpoint, many important resource allocation problems cannot be addressed with a single-resource model. For example, consider a defender of a cyber network that needs to ensure security against malware, denial-of-service, and social engineering attacks, all of which require different types of resources to prevent \cite{longtchi2024internet}. This leads to more nuanced decision problems regarding the allocation of heterogeneous assets. In particular, the success of a competitor depends not only on the amount of resources allocated, but also on the \emph{composition} of resource types being deployed in contested environments. Consequently, the specification of new classes of multi-dimensional winning rules becomes an important modeling consideration.


In this paper, we consider formulations where players have access to and can allocate multiple resource types. Here, success on individual contests now depend on the multi-dimensional allocation of resource types from both players, and we consider two such winning rules. Figure \ref{fig:setup}  provides diagrams illustrating the differences between the standard single-resource formulations and our novel multi-resource formulation.


\subsection{Contributions}

We propose a framework to study classes of multi-resource General Lotto games, where players compete by allocating multiple types of resources. Our goal is to provide equilibrium characterizations for this new class of models. Specifically, we seek to analytically derive the equilibrium payoffs and strategies for both players. Our main contributions are as follows.
\begin{itemize}[leftmargin=*]
	\item The formulation of a novel multi-resource General Lotto game in Section \ref{sec:model}. We consider a finite number of resource types, and both players have limited budgets pertaining to each type. The winning rule on each contest is a function that depends on the allocation of all resource types to that contest.
	\item Our main contribution provides a complete equilibrium characterization in the case of a weakest-link/best-shot winning rule (Theorem \ref{thm:WL}). That is, one of the players must send more resources of \emph{every} type in order to win a contest (weakest-link), while the other player only needs to send more of any one resource type (best-shot).
    
    \item In our second main contribution, we consider a scenario where utilizing resources incurs a cost. We derive a unique equilibrium that describes the optimal investments into each type of resource for both players (Theorem \ref{thm:investment}).

	\item Our third contribution provides a complete equilibrium characterization in the case that the winning rule is given by a weighted linear combination of the allocated resource types (Proposition \ref{thm:WC}). That is, each resource type has a relative effectiveness against all other types.
\end{itemize}
Numerical experiments are also included throughout to illustrate the main results. Figure \ref{fig:setup} provides a summary of these contributions.

\subsection{Related works}

Single-resource allocation models constitute the vast majority of studies in the Colonel Blotto literature.  This body of work provides unique insights into  many  aspects of adversarial interactions,  such as multi-agent coalition formation \cite{Kovenock_2012,Gupta_2014a,Gupta_2014b}, incomplete and asymmetric information \cite{Ewerhart_2021,Paarporn_2022_LCSS,paarporn2024incomplete}, and settings with networked contests and players \cite{Kovenock_2018,Guan_2019,diaz2023beyond}. These are often applied to specific applications, such as influence of social networks  \cite{masucci2014strategic}, market competition \cite{maljkovic2024blotto}, political races \cite{Behnezhad_2018}, and cybersecurity \cite{chia2011colonel}.

There are few studies that have explicitly considered  multiple and heterogeneous resource types. A multi-resource integer Blotto game was considered in \cite{Behnezhad_2017}, which focused on deriving efficient computational algorithms rather than studying specific winning rules and the corresponding equilibrium payoffs. Recent interest in defense applications proposes a framework for ``mosaic warfare", where multiple resource types have heterogeneous capabilities and effectiveness against the opponent's resource types \cite{grana2021findings}. The effect of fractionated resource types (quantized amounts of resources) has also been considered when players have access to different types of resources \cite{lamb2022benefits}. An analysis of tradeoffs between two types of resources that are allocated in different time periods, i.e. pre-allocated and real-time resource types, is recently given in \cite{Vu_EC2021,paarporn2024reinforcement} for General Lotto games.

\smallskip \noindent \textbf{Notation:}  We denote the set of non-negative vectors of length $n$ as $\R_{\geq 0}^n$. We will use bold lettering to denote vector variables. The function $\ones\{E\}$ is the indicator function on an event $E$: it is 1 if $E$ is true, and 0 otherwise. We denote $\Delta(S)$ as the collection of all probability distributions over the elements of an arbitrary set $S$.

\section{Problem formulation}\label{sec:model}

In this section, we first review the classic General Lotto game, which considers the allocation of a single resource type. We then formulate a novel setting where the players' depends on the allocation of multiple resource types.

\subsection{Classic, Single-Resource General Lotto Game}

Two players $\mcx$ and $\mcy$ compete over a collection of simultaneous valuable contests, $\mcc = \{1,2,\ldots,C\}$. The contests have values $\bs{v} = (v_1,\ldots,v_C) \in \Rp^C$. Without loss of generality, the values are normalized such that $\sum_{c\in\mcc} v_c = 1$. The players have access to a common single-dimensional resource, e.g. only money, security forces, or human resources, that they use to compete. A resource allocation for player $\mcx$ is a vector  $\btx = (x_1,\ldots,x_C) \in \Rp$, and similarly $\bty = (y_1,\ldots,y_C) \in \Rp$ for player $\mcy$. The amount $x_c$ is interpreted as the quantity of resources allocated to contest $c \in \mcc$. A player wins a contest by allocating more resources than the opponent. Thus, the payoff to player $\mcx$ is defined as
\be
	\pi_\mcx(\btx, \bty) \triangleq  \sum_{c\in \mcc} v_c \cdot \ones\{x_c \geq y_c \},
\ee
and the payoff to player $\mcy$ is defined as
\be
	\pi_\mcy(\btx, \bty) \triangleq  \sum_{c\in \mcc} v_c \cdot \ones\{x_c < y_c \} = 1 - \pi_\mcx(\btx, \bty).
\ee
In this formulation, each player is able to randomize its allocation, but cannot allocate an amount of resources that exceeds a fixed budget \emph{in expectation}. Player $\mcx$ has a fixed budget $X \geq 0$ and player $\mcy$ has a fixed budget $Y \geq 0$. An admissible strategy for player $\mcx$ is a cumulative distribution function $F_\mcx$ over $\Rp^C$ whose expected value does not exceed $X$. That is, the strategy space for player $\mcx$ is
\be\label{eq:GL_admissible}
	\F(X) \triangleq  \left\{ F_\mcx \in \Delta(\Rp^C) : \E_{\btx\sim F_\mcx}\left[\sum_{c\in\mcc} x_c\right] \leq X \right\}
\ee
and the strategy space for player $\mcy$ is $\F(Y)$. With slight abuse of notation, we denote the players' expected payoffs with respect to a strategy profile $(F_\mcx,F_\mcy)$ as
\be\label{eq:EU_GL}
    \pi_\mcx(F_\mcx,F_\mcy) = \E_{\substack{\btx \sim F_\mcx \\ \bty \sim F_\mcy}}\left[ \pi_\mcx(\btx,\bty) \right]. 
\ee
and $\pi_\mcy(F_\mcx,F_\mcy) = 1 - \pi_\mcx(F_\mcx,F_\mcy)$ is similarly defined. These payoff functions define a two-player simultaneous-move game, termed the \emph{General Lotto game}, which we will refer to as $\GL(X,Y;\bv)$. A diagram of this setup is provided in Figure \ref{fig:setup} (Left).

\begin{definition}\label{def:equil}
	An \emph{equilibrium} is a strategy profile $(F_\mcx^*,F_\mcy^*)$ that satisfies
	\be
		\pi_\mcx(F_\mcx,F_\mcy^*) \leq \pi_\mcx(F_\mcx^*,F_\mcy^*) \leq \pi_\mcx(F_\mcx^*,F_\mcy)
	\ee
	for any $F_\mcx \in \F(X)$ and for any $F_\mcy \in \F(Y)$.
\end{definition}
In an equilibrium, player $\mcx$ cannot increase its payoff by unilaterally changing its strategy, and player $\mcy$ cannot decrease player $\mcx$'s payoff by unilaterally changing its strategy. Provided below is a well-known result from the literature detailing the unique payoffs that the players obtain in any equilibrium of the General Lotto game.

\begin{theorem}[Adapted from~\cite{Hart_2008}]\label{thm:GL}
    Consider a General Lotto game $\GL(X,Y;\bv)$. The unique equilibrium payoff for player $\mcx$ is
    \be\label{eq:GL_equil} 
    	\pi_\mcx^*(X,Y) \triangleq L\left( \frac{Y}{X} \right),
    \ee
    where $L: \Rp \rightarrow (0,1]$ is defined as
    \be\label{eq:lotto}
    	L(\alpha) \triangleq
	\begin{cases}
		1 - \frac{\alpha}{2}, &\text{if } \alpha \leq 1 \\
		\frac{1}{2\alpha}, &\text{if } \alpha > 1 \\
	\end{cases}.
    \ee
    The equilibrium payoff to player $\mcy$ is $\pi_\mcy^*(X,Y) \triangleq 1 - \pi_\mcx^*(X,Y)$. 
    
\end{theorem}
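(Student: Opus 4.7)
The plan is to exhibit an explicit saddle point $(F_\mcx^*, F_\mcy^*)$ whose expected payoff equals $L(Y/X)$; since $\GL(X,Y;\bv)$ is a zero-sum game, the saddle-point inequalities will then automatically yield uniqueness of the equilibrium value. I would first reduce to the case $X \leq Y$, so that $\alpha := Y/X \geq 1$ and the target value for player $\mcx$ is $1/(2\alpha) = X/(2Y)$; the case $X > Y$ follows symmetrically by swapping the roles of the players and invoking the identity $L(\alpha) = 1 - L(1/\alpha)$ for $\alpha \geq 1$.

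The construction I would use is contest-by-contest with independent marginals. On each contest $c \in \mcc$, let $y_c$ be uniform on $[0, 2Yv_c]$ under $F_\mcy^*$, and under $F_\mcx^*$ let $x_c$ place mass $1 - X/Y$ at zero and distribute the remaining mass uniformly on $[0, 2Yv_c]$. Direct computation gives $\E[x_c] = Xv_c$ and $\E[y_c] = Yv_c$, which after summing and using $\sum_c v_c = 1$ yields $F_\mcx^* \in \F(X)$ and $F_\mcy^* \in \F(Y)$. A short calculation of $\Pr(x_c \geq y_c)$ on each contest returns $X/(2Y)$, so the on-path payoff to $\mcx$ is $\sum_c v_c \cdot X/(2Y) = 1/(2\alpha)$.

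To verify the saddle-point inequalities, I would bound deviations on each side using the concavity of the single-contest payoff in the deterministic allocation. Against $F_\mcy^*$, the payoff to any deterministic $\btx$ is $\sum_c v_c \min(x_c/(2Yv_c), 1) \leq \sum_c x_c/(2Y) \leq X/(2Y)$, and convexity of the budget set extends this bound to any mixed $F_\mcx \in \F(X)$. Against $F_\mcx^*$, the payoff to any deterministic $\by$ is $\sum_c v_c[(1-X/Y)\ones\{y_c > 0\} + (X/Y)\min(y_c/(2Yv_c), 1)]$; bounding the first summand by $1 - X/Y$ (using $\sum_c v_c = 1$) and the second by $X/(2Y)$ (using $\sum_c y_c \leq Y$) gives $\pi_\mcy \leq 1 - 1/(2\alpha)$.

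The main subtlety I anticipate is calibrating the point mass at zero in $F_\mcx^*$. Without the atom of weight $1 - X/Y$, player $\mcy$ could drive its payoff strictly above $1 - 1/(2\alpha)$ by spreading arbitrarily small positive allocations across the contests, exploiting the jump from ``$x_c = 0$ loses'' to ``$x_c > 0$ might win''; with the atom weight set precisely to $1 - X/Y$, the discontinuity becomes a flat bonus of $1 - X/Y$ regardless of how $\mcy$ partitions its budget, which exactly complements the linear term $X/(2Y)$ to close the saddle-point gap. Once both inequalities are established, any equilibrium profile must deliver payoff $L(Y/X)$ to $\mcx$, giving uniqueness of the equilibrium value.
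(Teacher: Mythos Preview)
The paper does not actually prove Theorem~\ref{thm:GL}; it is quoted as a known result adapted from Hart~(2008). Your argument is correct and is exactly the classical one: build contest-wise uniform marginals on $[0,2\max(X,Y)v_c]$ with an atom at zero for the budget-weaker player, then verify both saddle-point inequalities by replacing each $\min(\cdot,1)$ with its linear argument and invoking the expectation-budget constraint. This is also the $T=1$ specialization of the technique the paper develops in Section~\ref{sec:proofs} for Theorem~\ref{thm:WL}, where the same upper/lower-bound sandwich (Lemmas~\ref{lem:UB}--\ref{lem:max_LB}) is carried out for the multi-resource weakest-link rule; in particular your calibration of the atom weight $1-X/Y$ corresponds to the optimizer $\dx^*=1/\alpha$ in Lemma~\ref{lem:max_LB}.
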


It is important to note that in $\GL(X,Y;\bv)$, the equilibrium payoff depends only on the ratio of the players' budgets, and the total value of all contests $\sum_c v_c$. Here, since $\bv$ is normalized, the total value is just 1, but in cases where $\bv$ is not normalized, the equilibrium payoff would simply be $(\sum_c v_c) L(Y/X)$. A plot of the equilibrium payoff $\pi_\mcx^*$ is shown in Figure \ref{fig:contour} (Left).

\subsection{Multi-resource General Lotto game}

We now introduce an extension of the classic General Lotto game to a scenario where the players have multiple types of resources at their disposal (e.g. money, human resources, ...), and their success on any contest depends on the combination of allocations of all resource types. Suppose there are $T \geq 1$ distinct resource types, which we enumerate as $\mct = \{1,\ldots, T\}$. Let $\bX \triangleq (X_1,\ldots,X_T)$ and $\bY \triangleq (Y_1,\ldots,Y_T)$ be player $\mcx$ and $\mcy$'s resource budgets for each type, respectively. Here, an allocation for player $\mcx$, $\mcy$ is 
\be
	\ba
		\btx &= (\bx_1,\ldots,\bx_C) \in \Rp^{CT} \\
		\bty &= (\by_1,\ldots,\by_C) \in \Rp^{CT} \\
	\ea
\ee
where for each contest $c\in\mcc$, $\bx_c = (x_{c,1}, \ldots, x_{c,T}) \in \Rp^T$. We note that $T=1$ recovers the classic setup.

\begin{figure*}[t]
    \centering
    \includegraphics[scale=0.35]{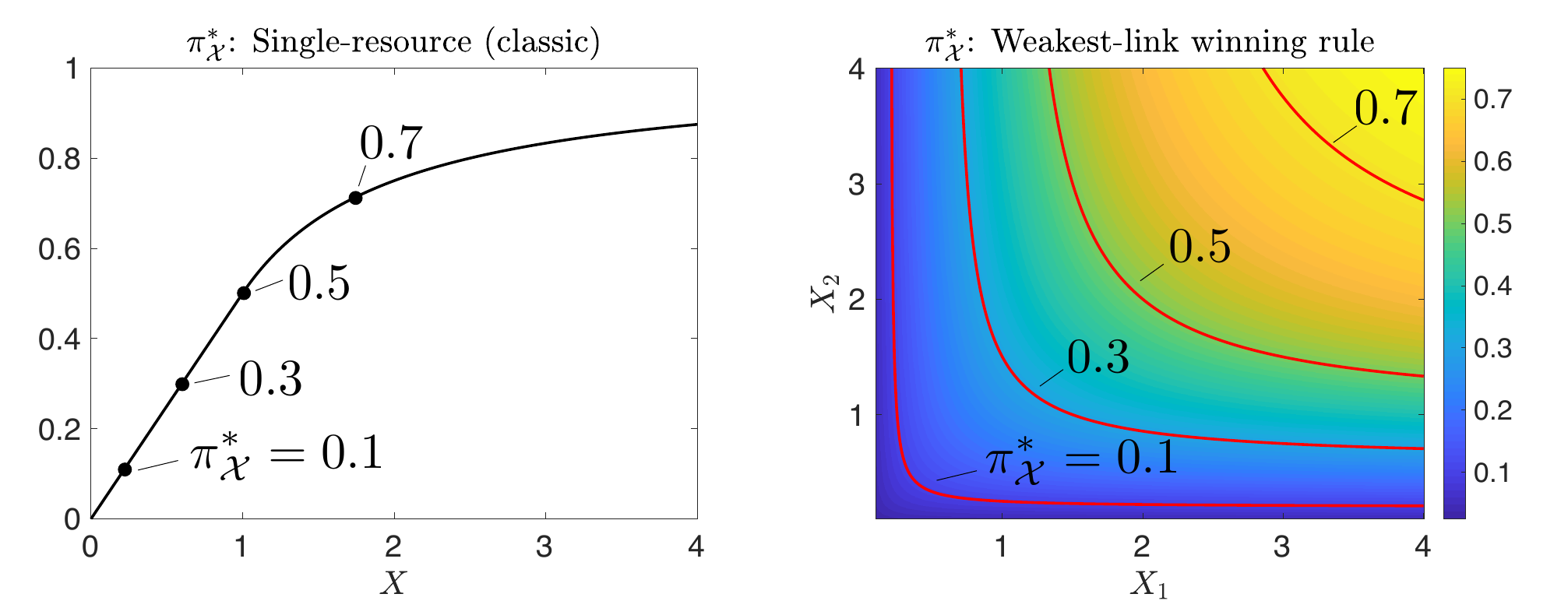}
    \includegraphics[scale=0.3]{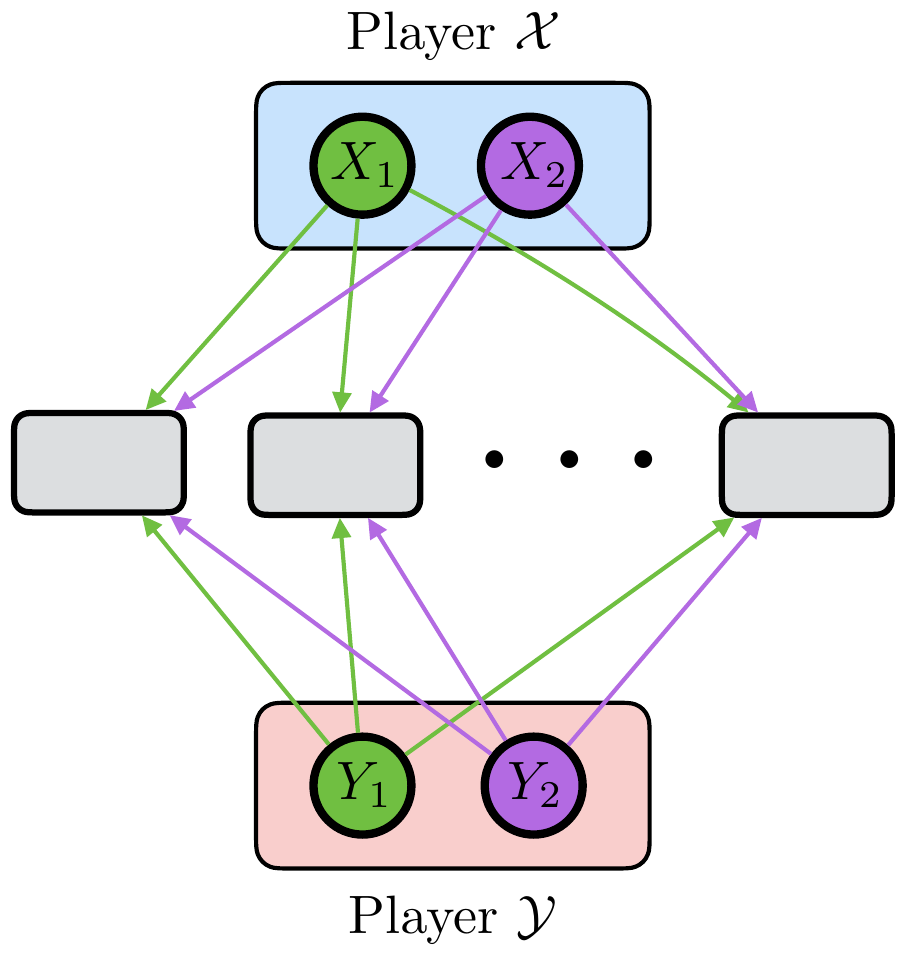}
    \caption{ (Left) This plot shows the equilibrium payoff to player $\mcx$ in the classic single-resource General Lotto game (Theorem \ref{thm:GL}), as a function of its resource budget $X \geq 0$. In this plot, we fix player $\mcy$'s budget $Y=1$. Note that for any given performance level, e.g. a payoff of 0.1, there is a \emph{unique} budget for $\mcx$ that achieves this performance level. (Center) This plot shows the equilibrium payoff to player $\mcx$ in the two-resource General Lotto game with the weakest-link winning rule (Theorem \ref{thm:WL}). For a fixed performance level, e.g. 0.1, there is now a \emph{contour} of budget pairs $(X_1,X_2)$ that achieves this payoff. Here, we fix budgets $Y_1 = Y_2 = 1$. (Right) Depiction of the two-resource game from the center figure.}
    \label{fig:contour}
\end{figure*}

The payoff to player $\mcx$ is
\be
	\pi_\mcx(\btx,\bty) \triangleq \sum_{c\in\mcc} v_c\cdot W(\bs{x}_c,\bs{y}_c)
\ee
where $W : \Rp^T \times \Rp^T \rightarrow \{0,1\}$ is the \emph{winning rule} for player $\mcx$. The payoff to player $\mcy$ is then 
\be
	\pi_\mcy(\btx,\bty) \triangleq \sum_{c\in\mcc} v_c\cdot (1-W(\bs{x}_c,\bs{y}_c)) = 1 - \pi_\mcx(\btx,\bty).
\ee
The winning rule $W$ determines the winner of contest $c\in\mcc$ based on the composition of the allocations of all resource types from both players, $\bx_c, \by_c$. This significantly differs from the classic setup that utilizes the ``winner-take-all" winning rule: whoever allocated more resources (of homogeneous type) wins the contest. An illustration that contrasts these setups is depicted in Figure \ref{fig:setup}. We will focus on the following winning rules defined below.

\begin{itemize}
    \item \textbf{Weakest-link:} Player $\mcx$ is required to allocate more than $\mcy$ for \emph{all resource types},
    \begin{equation}\label{eq:WL}
        W_{\WL}(\bs{x},\bs{y})  \triangleq \ones\left\{ x_t \geq y_t, \ \forall t \in \mct \right\}.
    \end{equation}
    
    \item \textbf{Best-shot:}  Player $\mcx$ is required to allocate more than $\mcy$ for \emph{at least one resource type},
    \be\label{eq:BS}
        W_{\BS}(\bs{x},\bs{y})  \triangleq \ones\left\{ x_t \geq y_t,  \text{ for some } t \in \mct \right\}.
    \ee
    
\end{itemize}
Note that a weakest-link winning rule for $\mcx$, $\pi_\mcx(\bs{x},\bs{y}) = \sum_{c} v_c W_{\WL}(\bs{x},\bs{y})$, necessarily implies $\mcy$ has a best-shot winning rule, $\pi_\mcy(\bs{x},\bs{y}) = \sum_{c} v_c W_{\BS}(\bs{y},\bs{x})$, and vice versa. The weakest-link/best-shot winning rules reflect interactions between a defender and attacker. For example, a defender is required to have superiority on all fronts in order to ensure security of a networked system or critical infrastructure -- e.g. ensuring security against malware, denial-of-service, and social engineering attacks.  


To complete the setup, we consider an admissible strategy for player $\mcx$ to be any distribution $F_\mcx$ over $\Rp^{CT}$ such that the expected total allocation of resource type $t$ across all contests does not exceed $X_t$. That is, the strategy space for player $\mcx$ is
\be\label{eq:ML_admissible}
	\F(\bX) \triangleq \left\{ F_\mcx \in \Delta(\Rp^{CT}) : \E_{\btx \sim F_\mcx}\left[\sum_{c\in\mcc} x_{c,t} \right] \leq X_t, \ \forall t \in \mct \right\}
\ee
and the strategy space for player $\mcy$ is $\F(\bY)$. In the same way that expected payoffs were defined in \eqref{eq:EU_GL}, the expected payoffs with respect to a strategy profile $(F_\mcx,F_\mcy) \in \F(\bX)\times\F(\bY)$ is given by
\be\label{eq:EU_ML}
	\pi_\mcx(F_\mcx,F_\mcy) = \E_{\substack{\btx\sim F_\mcx \\ \bty \sim F_\mcy}}\left[ \sum_{c\in\mcc} v_c\cdot W(\bs{x}_c,\bs{y}_c) \right],
\ee
with $\pi_\mcy(F_\mcx,F_\mcy)  = 1 - \pi_\mcx(F_\mcx,F_\mcy)$. This defines a two-player simultaneous-move game, which we will refer to as a \emph{Multi-resource General Lotto game}, and denote it as $\ML(\bs{X},\bs{Y};W,\bv)$. An \emph{equilibrium} $(F_\mcx^*,F_\mcy^*)$ in this game is analogously defined as in Definition \ref{def:equil}.
 
\section{Results: Equilibrium characterization of ML game}\label{sec:results}

In this section, we present our first main contribution of this paper, which provides the full equilibrium characterization for the multi-resource General Lotto game under the weakest-link/best-shot winning rule. The results detail the unique equilibrium payoffs to both players as well as the equilibrium strategy profiles.

\begin{theorem}\label{thm:WL}
    Consider a Multi-resource General Lotto game under the weakest-link winning rule, $\ML(\bX,\bY;W_\WL,\bv)$. The unique equilibrium payoff to player $\mcx$ is    
    \be\label{eq:WL_equil}
        \pi_\WL^*(\bX,\bY) \triangleq L(\alpha(\bX,\bY) )
    \ee
    where $\alpha(\bX,\bY) \triangleq \sum_{t\in\mct} \frac{Y_t}{X_t}$. The equilibrium payoff to $\mcy$ is $\pi_\BS^*(\bX,\bY) \triangleq 1 - \pi_\WL^*(\bX,\bY)$. An equilibrium strategy profile $(F_\mcx^*,F_\mcy^*) \in \F(\bX) \times \F(\bY)$ is given as follows. If $\alpha(\bX,\bY) \leq 1$, then for all $c \in \mcc$ and any $\bs{u} \in \Rp^T$,
    \be\label{eq:Fstar_X_stronger}
        \ba
            F_{\mcx,c}^*(\bs{u}) &= \min\left\{ \min\left\{ \frac{u_t}{2 v_c X_t }, 1 \right\} \right\}_{t \in \mct} \\
            F_{\mcy,c}^*(\bs{u}) &= 1-\alpha(\bX,\bY) \\
            &\quad + \alpha(\bX,\bY) \sum_{t \in \mct} \frac{Y_t/X_t}{\alpha(\bX,\bY)} \cdot \min\left\{ \frac{u_t}{2 v_c X_t }, 1 \right\}. \\
        \ea
    \ee
    If $\alpha(\bX,\bY) > 1$, then for all $c \in \mcc$ and any $\bs{u} \in \Rp^T$,
    \be\label{eq:Fstar_X_weaker}
        \ba
            F_{\mcx,c}^*(\bs{u}) &= 1 - \frac{1}{\alpha(\bX,\bY)} \\
            & +  \frac{1}{\alpha(\bX,\bY)}\cdot \min\left\{ \min\left\{ \frac{u_t}{2 v_c X_t \alpha(\bX,\bY) }, 1 \right\} \right\}_{t \in \mct} \\
            F_{\mcy,c}^*(\bs{u}) &= \sum_{t \in \mct} \frac{Y_t/X_t}{\alpha(\bX,\bY)} \cdot \min\left\{ \frac{ u_t}{2 v_c X_t \alpha(\bX,\bY)}, 1 \right\}. \\
        \ea
    \ee
\end{theorem}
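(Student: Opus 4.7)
My plan is to verify that $(F_\mcx^*,F_\mcy^*)$ is an equilibrium by computing each player's best-response payoff against the other and showing both equal $L(\alpha(\bX,\bY))$; uniqueness of the equilibrium value then follows from the standard minimax argument for constant-sum games. Unpacking the structure of the proposed strategies: the marginal $F_{\mcx,c}^*$ is a \emph{comonotonic} coupling of all resource types---in the regime $\alpha(\bX,\bY)\le 1$ it is the joint law of $(2v_cX_1U_c,\ldots,2v_cX_TU_c)$ for a single $U_c\sim\text{Unif}[0,1]$; in the regime $\alpha(\bX,\bY)>1$ the same coupling is used with widened scales $2v_cX_t\alpha(\bX,\bY)$, preceded by an atom of mass $1-1/\alpha(\bX,\bY)$ at the origin. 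In contrast, $F_{\mcy,c}^*$ is a \emph{mixture} placing at most one resource type per contest: type $t$ is selected with probability $(Y_t/X_t)/\max(\alpha(\bX,\bY),1)$ and allocated uniformly on $[0,2v_cX_t\max(\alpha(\bX,\bY),1)]$, with any remaining probability at the all-zero allocation. A direct computation using $\sum_c v_c=1$ then confirms the marginal budget constraints $\E[\sum_c x_{c,t}]=X_t$ and $\E[\sum_c y_{c,t}]=Y_t$.

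The core of the argument consists of two best-response computations. Against $F_\mcy^*$, since $\mcy$ places mass on at most one resource type per contest, $\mcx$'s weakest-link win probability on contest $c$ decomposes as a sum over types $t$ of concave terms proportional to $\min\{x_{c,t}/(2v_cX_t\beta),1\}$ (plus a regime-dependent constant), where $\beta\in\{1,\alpha(\bX,\bY)\}$. The marginal value of a unit of type-$t$ resource works out to a constant across all contests $c$, so any feasible allocation exhausting budget $X_t$ without saturating the per-contest caps $2v_cX_t\beta$ is optimal; summing yields $\mcx$'s best-response value. Against $F_\mcx^*$, the comonotonicity of $\mcx$'s allocation reduces $\mcy$'s best-shot event $\{\exists t:y_{c,t}>x_{c,t}\}$ to $\{U_c<\max_t y_{c,t}/(2v_cX_t\beta)\}$ (conditional on $\mcx$ playing non-zero), which immediately implies that $\mcy$ should place at most one type per contest and that the marginal value per unit of type-$t$ resource is again constant in $c$. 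Matching the two computations shows both best-response payoffs equal $1-\alpha(\bX,\bY)/2$ when $\alpha(\bX,\bY)\le 1$ and $1/(2\alpha(\bX,\bY))$ when $\alpha(\bX,\bY)>1$---the two branches of $L(\alpha(\bX,\bY))$.

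The main obstacle is handling the case split on $\alpha(\bX,\bY)$ cleanly and ensuring the caps $\min\{\cdot,1\}$ remain inactive at the candidate profile so that the marginal-value argument is valid throughout. The atom of mass $1-1/\alpha(\bX,\bY)$ at the origin in the resource-advantaged player's strategy (active only when $\alpha(\bX,\bY)>1$) rescales the effective support by a factor $\alpha(\bX,\bY)$, keeping each player's budget strictly inside the caps. Once this is confirmed, matching the best-response payoffs to $L(\alpha(\bX,\bY))$ is a direct algebraic calculation using $\sum_t Y_t/X_t=\alpha(\bX,\bY)$ and $\sum_c v_c=1$, and the claim about the strategies in \eqref{eq:Fstar_X_stronger}--\eqref{eq:Fstar_X_weaker} reduces to reading off the distributions that achieve these best-response values.
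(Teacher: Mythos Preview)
Your approach is correct and structurally identical to the paper's: both arguments establish the equilibrium by showing $\sup_{F_\mcx}\pi_\mcx(F_\mcx,F_\mcy^*)\le L(\alpha)\le \inf_{F_\mcy}\pi_\mcx(F_\mcx^*,F_\mcy)$ via direct best-response computations against the proposed strategies, then invoke constant-sum minimax for uniqueness of the value. The only notable difference is in how the lower-bound side (best response of $\mcy$ against $F_\mcx^*$) is handled. The paper works at the level of the CDF $\hat{F}_{\mcx,c}$, applies inclusion--exclusion to expand $\P(x_{c,t}\ge y_{c,t},\ \forall t)$ in terms of marginals, and then invokes a separate combinatorial identity (their Lemma~A.1) to collapse the alternating sum $\sum_{S}(-1)^{|S|+1}\min_{t\in S}z_t$ to $\max_t z_t$, before crudely bounding $\max\le\sum$. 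Your route---recognizing that $F_{\mcx,c}^*$ is the law of the comonotonic vector $(2v_cX_t\beta\,U_c)_t$ so that the best-shot event is simply $\{U_c<\max_t y_{c,t}/(2v_cX_t\beta)\}$---reaches the same $\max$ expression in one line and makes the ``one type per contest'' optimality of $\mcy$ transparent. This is a genuine simplification: it eliminates the need for the auxiliary inclusion--exclusion lemma entirely, at the (negligible) cost of relying on the specific sampling representation rather than the abstract CDF form.
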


Section \ref{sec:proofs} details the full proof of Theorem \ref{thm:WL} by developing a series of intermediate Lemmas.

We first give some remarks about the form of the equilibrium payoff \eqref{eq:WL_equil}. We note that it takes the same form $L(\alpha)$ of the equilibrium payoff from the classic General Lotto game (Theorem \ref{thm:GL}), but instead the budget ratio $\alpha(\bX,\bY)$ is evaluated to be the summation of the budget ratios over all resource types. Thus, we observe that under the weakest-link winning rule for $\mcx$, it is necessary that it has a positive resource budget $X_t>0$ $\forall t\in\mct$ in order to attain a non-zero payoff, since player $\mcx$ needs to win on all resource types for any given contest. We note this is not the case for player $\mcy$, since it has the best-shot winning rule on any given contest. 

An illustration of the equilibrium payoff is shown in Figure \ref{fig:contour} (Center). A notable feature here is that there is a \emph{contour} of resource budgets $(X_1,X_2)$ satisfying $\frac{Y_1}{X_1} + \frac{Y_2}{X_2} = \alpha$ that achieve the same, fixed equilibrium payoff $L(\alpha)$. The multi-dimensional dependence of the players' performance metrics suggests one can evaluate the most cost-effective investments in combinations of resources to achieve a given performance level. We will address this problem in Section \ref{sec:investment}.


We also give a brief discussion of the equilibrium strategies here, but we direct the reader to Section \ref{sec:proofs} for their full interpretations and derivations. With strategy $F_{\mcx}^*$ \eqref{eq:Fstar_X_stronger}, player $\mcx$ chooses its allocation to contest $c$ according to $x_{c,t} = 2v_c X_t \cdot U$ for each $t \in \mct$, where $U \in [0,1]$ is an independent uniform random sample. Given $\mcx$ has the weakest-link rule, this strategy ensures that a positive amount of every resource type is allocated to each contest. With strategy $F_\mcy^*$, player $\mcy$ chooses its allocation $\bs{y}_c$ to contest $c$ as follows. With probability $\alpha(\bX,\bY)$, it randomly selects a resource type $t \in \mct$ with probability $\frac{Y_t/X_t}{\alpha(\bX,\bY)}$. Then, it allocates an amount $y_{c,t} = 2 v_c X_t \cdot U$ of type $t$ resources, where $U \in [0,1]$ is an independent uniform random sample, and does not allocate any other resource types, i.e. $y_{c,t'} = 0$ for all $t' \neq t$. While the best-shot rule specifies that $\mcy$ can win by allocating more on \emph{at least one} resource  type, the equilbrium strategy $F_\mcy^*$ attempts to win the contest by winning on only one resource type.




\section{Strategic investment of multiple resources}\label{sec:investment}

In this section, we examine how valuable each resource type is by considering a scenario where the players must decide how much of each type to invest in, given there are costs of investment. We formulate a two-stage interaction where the players make investment decisions in the first stage to acquire resources, and then engage in a multi-resource Lotto game in the second stage. For resources of type $t \in \mct$, player $\mcx$ pays a per-unit cost $\costx_t > 0$ and player $\mcy$ pays a per-unit cost $\costy_t$. The interaction unfolds as follows.

\noindent\textbf{Stage 1:} Both players simultaneously decide their resource investments, $\bX$ and $\bY$. 

\noindent\textbf{Stage 2:} The players engage in a multi-resource General Lotto game $\ML(\bX,\bY;W,\bv)$.


In the case that the winning rule is weakest-link, the final payoffs obtained are
\be
	U_\mcx(\bX,\bY) \triangleq \pi^*_\WL(\bX,\bY) - \sum_{t\in\mct} \costx_t \cdot X_t
\ee
for player $\mcx$, and
\be
	U_\mcy(\bX,\bY) \triangleq \pi^*_\BS(\bX,\bY) - \sum_{t\in\mct} \costy_t \cdot Y_t
\ee
for player $\mcy$. Here, each player uses its resources invested from Stage 1 to compete in the simultaneous-move game $\ML$ in stage 2, and the resulting payoffs from stage 2 are the unique equilibrium payoffs of $\ML(\bX,\bY;W,\bv)$ characterized in Theorems \ref{thm:WL} and \ref{thm:WC}.  Consequently, the only strategic decisions that need to be investigated are how the players should invest in Stage 1. This two-stage interaction can thus be viewed as a two-player strategic-form game with strategy spaces $\bX \in \Rp^T$ for $\mcx$ and $\bY\in\Rp^T$ for $\mcy$. We denote this game as the \emph{Multi-resource General Lotto game with costs}, $\MLC(\bs{\costx},\bs{\costy},W)$.

Our goal is to identify the equilibria of $\MLC(\bs{\costx},\bs{\costy},W)$. Here, an equilibrium is an investment profile $(\bX^*,\bY^*)\in\Rp^T \times \Rp^T$ that satisfies
\be
	\ba
		U_\mcx(\bX^*,\bY^*) &\geq U_\mcx(\bX,\bY^*), \ \forall \bX \in \Rp^T \\
		U_\mcy(\bX^*,\bY^*) &\geq U_\mcx(\bX^*,\bY), \ \forall \bY \in \Rp^T \\
	\ea
\ee

Note that unlike the General Lotto games GL and ML, the strategic interaction with costs is not a constant-sum game because  the associated costs of investment are included in the players' payoff functions. The result below provides the equilibrium investments and payoffs for both players under the weakest-link winning rule.
\begin{theorem}\label{thm:investment}
	Consider the Multi-resource General Lotto game with costs under the weakest-link winning rule $\MLC(\bs{\costx},\bs{\costy},W_\WL)$. It admits a unique equilibrium investment profile $(\bX^*,\bY^*)$, wherein both players spend an identical amount of money, i.e.
    \be
        \sum_{t\in\mct} \costx_t X_t^* = \sum_{t\in\mct} \costy_t Y_t^*.
    \ee
    The unique equilibrium investment profile is given as follows. If $r \triangleq \sum_{t\in\mct} \frac{\costx_t}{\costy_t} > 1$, then
    \be
        X_t^* = \frac{1}{2 \costy_t r^2}, \quad Y_t^* = \frac{\costx_t}{2 \costy_t^2 r^2} \quad \forall t\in\mct.
    \ee
    yielding equilibrium payoffs of $U_\mcx^* \triangleq U_\mcx(\bX^*,\bY^*) = 0$ and $U_\mcy^* \triangleq U_\mcy(\bX^*,\bY^*) = 1 - \frac{1}{2r}$. 
    
    \noindent If $r \triangleq \sum_{t\in\mct} \frac{\costx_t}{\costy_t} \leq 1$, then
    \be
        X_t^* = \frac{1}{2\costy_t}, \quad Y_t^* = \frac{\costx_t}{2\costy_t^2} \quad \forall t\in\mct.
    \ee
    yielding equilibrium payoffs of $U_\mcx^*  = 1 - \frac{r}{2}$ and $U_\mcy^*  = 0$.
\end{theorem}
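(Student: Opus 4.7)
The plan is to derive the equilibrium from the joint first-order conditions (FOCs), verify global optimality via concavity, and rule out boundary equilibria.

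First, I would write the interior FOCs. Letting $|L'(\alpha)| = -L'(\alpha) > 0$, player $\mcx$'s condition $\partial U_\mcx/\partial X_t = 0$ reads $|L'(\alpha)| Y_t / X_t^2 = \costx_t$, and player $\mcy$'s condition $\partial U_\mcy/\partial Y_t = 0$ reads $|L'(\alpha)|/X_t = \costy_t$. Substituting the latter into the former yields $Y_t = \costx_t |L'(\alpha)|/\costy_t^2$, whence $Y_t/X_t = \costx_t/\costy_t$ and $\alpha^* = \sum_t Y_t/X_t = r$. Since $L$ is $C^1$ at $\alpha = 1$ with both one-sided derivatives equal to $-1/2$, the FOCs apply seamlessly across the kink; evaluating $|L'(r)|$ on the appropriate piece gives the claimed formulas for $\bX^*, \bY^*$ in both the $r > 1$ and $r \leq 1$ cases, with $\alpha^* = r$ lying in the consistent regime of $L$. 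The joint FOC system has a unique interior solution.

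Next, I would verify global optimality via concavity. For fixed $\bY^* > 0$, the function $\alpha(\bX, \bY^*) = \sum_t Y_t^*/X_t$ is convex in $\bX$, so $L(\alpha) = 1 - \alpha/2$ on $\{\alpha \leq 1\}$ composes to a concave function of $\bX$, and $L(\alpha) = 1/(2\alpha) = 1/(2\sum_t Y_t^*/X_t)$ on $\{\alpha \geq 1\}$ is a rescaled harmonic-mean-type function of $\bX$, also concave on the positive orthant (verifiable by direct Hessian computation). The two pieces agree in value and gradient at $\alpha = 1$, so the composition $L \circ \alpha(\cdot, \bY^*)$ is concave throughout, and $U_\mcx(\cdot, \bY^*)$ inherits concavity after subtracting the linear cost; extending by continuity to the boundary (where $X_t = 0$ with $Y_t^* > 0$ forces $\alpha = \infty$ and $L = 0$), $\bX^*$ globally maximizes $U_\mcx(\cdot, \bY^*)$ on $\Rp^T$. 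An analogous argument, simpler because $\alpha(\bX^*, \cdot)$ is linear in $\bY$, shows $\bY^*$ globally maximizes $U_\mcy(\bX^*, \cdot)$.

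Finally, I would show uniqueness by ruling out boundary equilibria and compute the payoffs. Any candidate with $X_t = 0$ for some $t$ and $Y_t > 0$ has $\pi^*_\WL = 0$, so $\mcx$ would weakly prefer to withdraw all other types and reach $U_\mcx = 0$; but $\mcy$'s best response to $\bX = 0$ is not the candidate $\bY^*$ (she would prefer to shrink $\bY$ toward zero to reduce cost while still winning every contest), which eliminates these boundary profiles as equilibria. The only remaining candidates are interior, pinned down uniquely by the joint FOCs above. The identical-spending property follows from $\sum_t \costx_t X_t^* = \sum_t \costy_t Y_t^* = r|L'(r)|$, which evaluates to $1/(2r)$ for $r > 1$ and to $r/2$ for $r \leq 1$; the claimed payoffs $U_\mcx^*, U_\mcy^*$ then follow by substitution into the definitions of $U_\mcx, U_\mcy$. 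The main technical obstacle is establishing global concavity of $U_\mcx(\cdot, \bY^*)$ across the kink at $\alpha = 1$, since $L$ itself transitions from concave (linear) to convex in $\alpha$; concavity of the composition is recovered by showing each piece of $L$ composes with the convex $\alpha(\cdot, \bY^*)$ to yield a concave function of $\bX$, and that the two composed pieces join $C^1$-smoothly.
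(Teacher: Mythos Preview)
Your approach is correct but genuinely different from the paper's. The paper proceeds in two stages: it first proves an auxiliary Lemma~\ref{lem:sunk} for the sunk-cost game $\MLM(M_\mcx,M_\mcy,\bs{\costx},\bs{\costy},W_\WL)$, showing that for any fixed monetary expenditures the zero-sum inner problem pins down the split among types and yields payoff $L(r\,M_\mcy/M_\mcx)$; it then reduces $\MLC$ to a one-dimensional game in $(M_\mcx,M_\mcy)$ and finds the equilibrium by explicitly computing and intersecting the scalar best-response curves $\text{BR}_\mcx,\text{BR}_\mcy$. You instead attack the $2T$-dimensional game directly via the joint FOCs and a global concavity argument. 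The paper's decomposition exposes structure (the optimal proportions $X_t^*/\sum_s X_s^*$ and $Y_t^*/\sum_s Y_s^*$ are independent of the total money spent) and makes the outer analysis elementary; your route is more compact and avoids the intermediate lemma, at the price of needing two nontrivial facts you only assert: that $\bX\mapsto 1/(2\sum_t Y_t^*/X_t)$ is concave on the positive orthant (true---it is a scaled weighted harmonic mean), and that two concave pieces joining $C^1$ along $\{\alpha=1\}$ glue to a globally concave function (true here because $\{\alpha\le 1\}$ is a convex sublevel set, so any line meets it in an interval and the 1D restriction has non-increasing derivative).

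One gap worth tightening: your boundary discussion covers only the case $X_t=0$ with $Y_t>0$. You also need to dispatch $X_t=Y_t=0$ for some (but not all) $t$, where $\mcy$ profitably deviates to $Y_t=\epsilon$ (driving $\alpha\to\infty$, $\pi_\BS^*\to 1$), and the case $Y_t=0$ with $X_t>0$, where $\partial U_\mcx/\partial X_t=-\costx_t<0$ shows $\mcx$ is not best-responding. With those added, your uniqueness claim goes through.
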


Interestingly, regardless of the cost parameters, both players spend an equivalent amount of money in  equilibrium. However, the total amount of resources purchased is not necessarily identical. Player $\mcx$ purchases $X_{tot}^* = \sum_t X_t^* = \frac{1}{2r^2}\sum_t \frac{1}{\costy_t}$ total resources, and player $\mcy$ purchases $Y_{tot}^* =  \frac{1}{2r^2}\sum_t \frac{\costx_t}{\costy_t^2}$ total resources.  We note that both of these quantities depend on the cost parameters $\bs{\kappa},\bs{\sigma}$.

The players' equilibrium payoffs have a discontinuity when the parameter $r=1$. This is unlike the General Lotto games GL and ML, where although there are two separate cases, the equilibrium payoffs are still continuous in the $\alpha$ parameter. Figure \ref{fig:investments} (Top Right) provides a plot of the players' equilibrium payoffs.

The fraction of each resource type relative to the total resources purchased is also different among the players. For player $\mcx$, the fraction invested in type $t$ is $\frac{X_t^*}{X_{tot}^*}=\frac{1/\costy_t}{\sum_{t'} 1/\costy_{t'}}$. Interestingly, these fractions only depend on the cost parameters of $\mcy$, and do not depend on its own cost parameters $\bs{\kappa}$. For player $\mcy$, the fraction invested in type $t$ is $\frac{Y_t^*}{Y_{tot}^*}=\frac{\costx_t/\costy_t^2}{\sum_{t'} \costx_{t'}/\costy_{t'}^2}$. These fractions depend on both cost parameters $\bs{\kappa},\bs{\sigma}$. A numerical case study that illustrates the equilibrium resource investments is given in the plots of Figure \ref{fig:investments}.


\begin{figure}[t]
    \centering
    \includegraphics[width=0.9\linewidth]{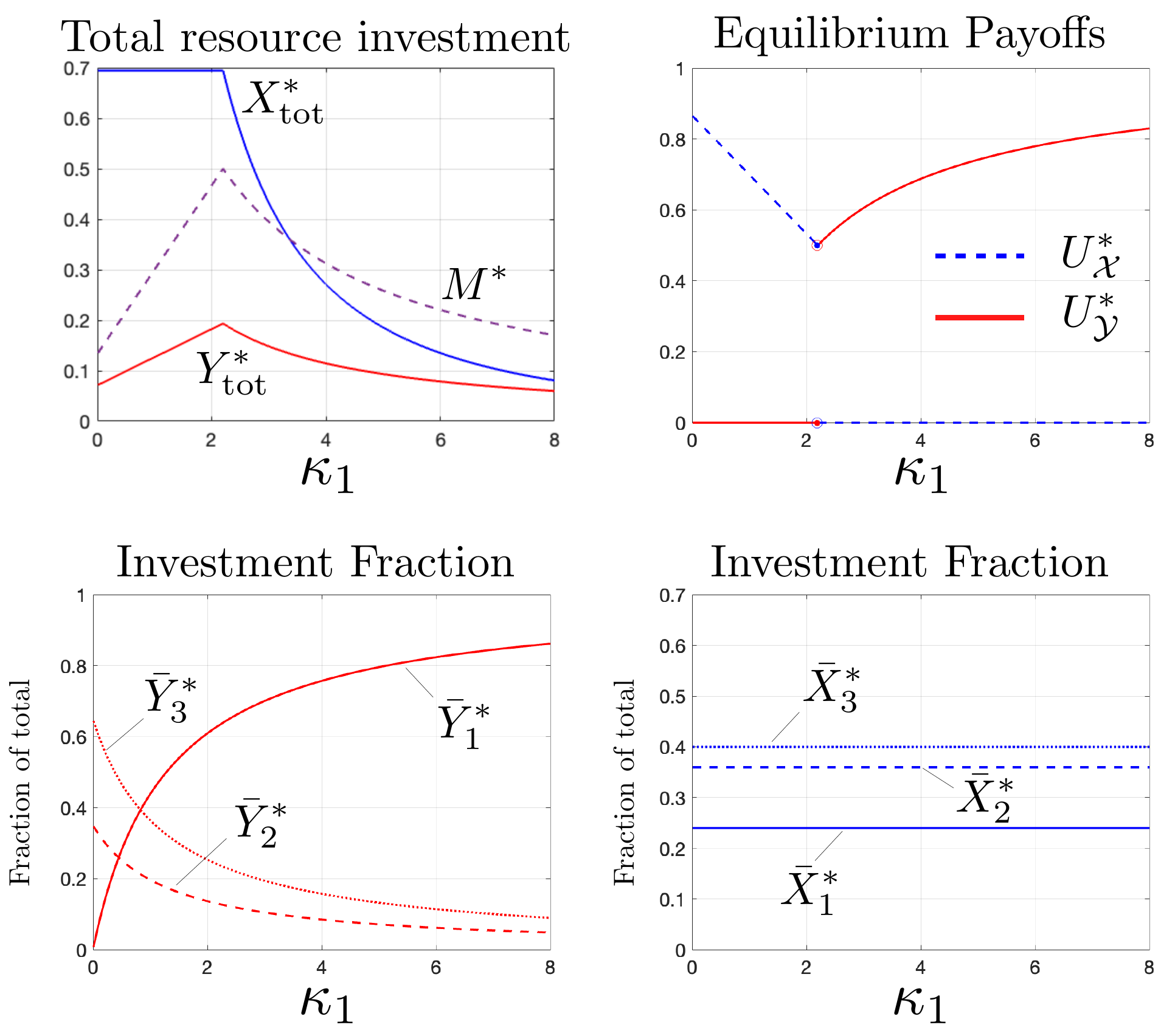}
    \caption{These plots illustrate the equilibrium properties of the ML-C game established in Theorem \ref{thm:investment} through a simulation example with three resource types. We fix costs $\bs{\sigma} = (3, 2, 1.8)$ for player $\mcy$, and vary the cost of only resource 1 for player $\mcx$: $\bs{\sigma} = (\kappa_1, 0.2, 0.3)$, with $\kappa_1 \geq 0$ as the x-axis for all plots above.  (Top Left) The players' total resource investment in equilibrium, denoting $X_{\text{tot}}^* = \sum_{t=1}^3 X_t^*$ and $Y_{\text{tot}}^* = \sum_{t=1}^3 Y_t^*$. Interestingly, player $\mcx$ does not change its total resource investment $X_{\text{tot}}^*$ for low costs, $\kappa < 2.2$. The dashed line $M^*=\sum_{t=1}^3 \kappa_t X_t^* = \sum_{t=1}^3 \sigma_t Y_t^*$ denotes the money spent by each player, which is the same for both. (Top Right) Equilibrium payoffs to both players. Increasing $\kappa_1$ linearly increases the ratio $r=\sum_{t\in\mct} \frac{\costx_t}{\costy_t}$, which determines the equilibrium payoff. In this example $r = 1$ when $\kappa_1 = 2.2$. (Bottom Left) Player $\mcy$'s investment fractions in the three resource types. As resource 1 becomes more expensive for player $\mcx$, player $\mcy$ takes advantage by investing more into resource 1. Here, we denote $\bar{Y}_i^* = Y_i^*/Y^*_{tot}$. (Bottom Right) Player $\mcy$'s investment fractions in the three resource types. These fractions remain constant since they only depend on the cost parameters $\bs{\sigma}$ of player $\mcy$. }
    \label{fig:investments}
\end{figure}


To establish the proof of Theorem \ref{thm:investment}, we first approach the problem of finding the equilibrium resource investments when the players are given fixed use-it-or-lose-it monetary budgets $M_\mcx, M_\mcy \geq 0$ to spend on resources. In this scenario, the monetary budgets are viewed as sunk costs, so that  player $\mcx$ can choose any $\bX$ for which $\sum_{t\in\mct} \costx_t X_t = M_\mcx$ with the objective of maximizing only $\pi_\WL^*(\bX,\bY)$. Similarly, player $\mcy$ can choose any $\bY$ for which $\sum_{t\in\mct} \costy_t Y_t = M_\mcy$ with the objective of maximizing only $\pi_\BS^*(\bX,\bY)$. We denote this constrained game as $\MLM(M_\mcx, M_\mcy,\bs{\costx},\bs{\costy},W_\WL)$.

\begin{lemma}\label{lem:sunk}
	With sunk costs $M_\mcx, M_\mcy$, the equilibrium investments $(\bX^*,\bY^*)$ in the game $\MLM(M_\mcx, M_\mcy,\bs{\costx},\bs{\costy},W_\WL)$ are given by
	\be\label{eq:sunk_division}
		X_t^* = \frac{M_\mcx}{\costx_t} \frac{\costx_t/\costy_t}{r},  \quad Y_t^* = \frac{M_\mcy}{\costy_t} \frac{\costx_t/\costy_t}{r}, \ \forall t \in \mct.
	\ee
	The equilibrium payoffs are
	\be\label{eq:sunk_payoff}
		\ba
			\pi_\WL^*(\bX^*,\bY^*) &= L\left( \frac{M_\mcy}{M_\mcx} r \right) \\
			\pi_\BS^*(\bX^*,\bY^*) &= 1 - L\left( \frac{M_\mcy}{M_\mcx} r \right). \\
		\ea
	\ee
\end{lemma}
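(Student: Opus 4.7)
The plan is to reduce the sunk-cost game to a scalar convex-concave saddle-point problem on the single quantity $\alpha(\bX,\bY)=\sum_{t} Y_t/X_t$, and then read off $(\bX^{*},\bY^{*})$ from the associated KKT system. By Theorem~\ref{thm:WL}, the Stage-2 equilibrium payoffs are $L(\alpha(\bX,\bY))$ for $\mcx$ and $1-L(\alpha(\bX,\bY))$ for $\mcy$; since $L$ is strictly decreasing on $\Rp$, player $\mcx$'s optimization reduces to minimizing $\alpha$ over $\{\bX\in\Rp^{T}:\sum_{t}\costx_{t}X_{t}=M_\mcx\}$, and player $\mcy$'s to maximizing $\alpha$ over $\{\bY\in\Rp^{T}:\sum_{t}\costy_{t}Y_{t}=M_\mcy\}$. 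The function $\alpha$ is strictly convex in $\bX$ on the positive orthant (whenever some $Y_{t}>0$) and linear in $\bY$, and both feasible sets are affine, so existence of a saddle point is standard.

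Next I would write the KKT conditions for the two inner problems. For $\mcx$, the Lagrangian $\sum_{t}Y_{t}/X_{t}-\lambda_\mcx(\sum_{t}\costx_{t}X_{t}-M_\mcx)$ gives the first-order condition $Y_{t}^{*}/(X_{t}^{*})^{2}=\lambda_\mcx\costx_{t}$ for every $t$. For $\mcy$, since the objective is linear in $\bY$, an interior equilibrium with $Y_{t}^{*}>0$ for all $t$ requires the indifference condition $1/X_{t}^{*}=\lambda_\mcy\costy_{t}$, i.e.\ $\costy_{t}X_{t}^{*}$ is the same constant across $t$. Combined with $\mcx$'s budget, this forces $X_{t}^{*}=c/\costy_{t}$ with $c=M_\mcx/r$, recovering $X_{t}^{*}=M_\mcx/(\costy_{t}r)$. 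Substituting into $\mcx$'s first-order condition and imposing $\sum_{t}\costy_{t}Y_{t}^{*}=M_\mcy$ fixes $\lambda_\mcx$ and yields $Y_{t}^{*}=\costx_{t}M_\mcy/(\costy_{t}^{2}r)$. A direct algebraic check confirms both expressions coincide with \eqref{eq:sunk_division}.

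To finish, I would evaluate $\alpha(\bX^{*},\bY^{*})=\sum_{t}Y_{t}^{*}/X_{t}^{*}=(M_\mcy/M_\mcx)\,r$ and apply the definition of $L$ in \eqref{eq:lotto} to obtain \eqref{eq:sunk_payoff}. The main obstacle is the linearity of $\alpha$ in $\bY$: player $\mcy$'s unilateral best-response correspondence against $\bX^{*}$ is the entire face of its budget simplex on which indifference holds, so $\bY^{*}$ is not pinned down by a one-sided best-response argument. The specific proportions in \eqref{eq:sunk_division} must be extracted from the joint saddle-point system, with the extra requirement that $\bX^{*}$ be $\mcx$'s best response to $\bY^{*}$ forcing $Y_{t}^{*}\propto \costx_{t}/\costy_{t}^{2}$; $\mcy$'s budget then fixes the overall scale. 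Once this subtlety is addressed, the remaining manipulations are routine, and convex-concavity on the affine feasible sets guarantees that the resulting KKT point is in fact a saddle point of the zero-sum reduction, hence an equilibrium of $\MLM$.
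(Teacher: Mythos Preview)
Your argument is correct and reaches the result by a somewhat different route than the paper. The paper keeps $L$ in the objective and carries the factor $L'(\alpha)$ through both players' first-order conditions: it first computes $\mcx$'s best-response map $X_t^*(\bY)=\frac{M_\mcx}{\costx_t}\frac{\sqrt{\costx_t Y_t}}{\sum_s\sqrt{\costx_s Y_s}}$ for \emph{arbitrary} $\bY$, substitutes this expression into $\mcy$'s problem, and then solves via the change of variable $Z_t=\sqrt{\costx_t Y_t}$. You instead strip $L$ away at the outset using its strict monotonicity and work directly with $\alpha$; the linearity in $\bY$ then hands you the indifference condition $\costy_t X_t^*=\text{const}$, which together with $\mcx$'s budget pins down $\bX^*$ immediately, without ever computing the off-equilibrium best-response map or introducing the square-root substitution. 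Your route is shorter and makes the convex--concave saddle structure explicit, and your treatment of the degeneracy on $\mcy$'s side---that the unilateral best-response correspondence is an entire face, resolved only by imposing $\mcx$'s first-order condition jointly---is more careful than the paper's, which leaves that point implicit. One small imprecision worth fixing: $\alpha$ is strictly convex in $\bX$ only when \emph{every} $Y_t>0$ (the Hessian is diagonal with entries $2Y_t/X_t^3$), not merely when some $Y_t>0$; this is harmless here since the equilibrium $\bY^*$ you exhibit is interior.
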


The proof of this intermediate result is deferred to the Appendix. Lemma \ref{lem:sunk} asserts that if any positive amount of monetary investment is spent, then in an equilibrium, the proportions of that investment that go towards each resource type is given precisely by \eqref{eq:sunk_division}. The resulting payoffs are then given in \eqref{eq:sunk_payoff}. This result allows us to view the strategic decision-making in Stage 1 of $\MLC(\bs{\costx},\bs{\costy},W_\WL)$ simply as a monetary investment $M_\mcx, M_\mcy \geq 0$.

Thus, the problem of finding equilibria of $\MLC(\bs{\costx},\bs{\costy},W_\WL)$ can be re-cast as finding the equilibria of the game associated with payoff functions
\be
	\ba
		U_\mcx(M_\mcx,M_\mcy) &= L\left( \frac{M_\mcy}{M_\mcx} r \right) - M_\mcx \\
		U_\mcy(M_\mcx,M_\mcy) &= 1- L\left( \frac{M_\mcy}{M_\mcx} r \right) - M_\mcy \\
	\ea
\ee
and strategies $M_\mcx \geq 0$, $M_\mcy \geq 0$. We are now able to proceed with the proof of Theorem \ref{thm:investment}.

\begin{proof}[Proof of Theorem \ref{thm:investment}]
	Our approach is to derive the best-response curves for both players. For a fixed $M_\mcx \geq 0$,
	\be
		\text{BR}_\mcy(M_\mcx) := \max_{M_\mcy} U_\mcy(M_\mcx, M_\mcy)
	\ee
	Omitting the algebraic steps, we can precisely characterize this as
	\be
		\text{BR}_\mcy(M_\mcx) =
		\begin{cases}
			\sqrt{\frac{M_\mcx}{2r}}, &\text{if } M_\mcx < r/2 \\
			[0,\frac{1}{2}], &\text{if } M_\mcx = r/2 \\
			0, &\text{if } M_\mcx > r/2 \\
		\end{cases}
	\ee
	In a similar manner, we derive the best-response for player $\mcx$ as
	\be
		\text{BR}_\mcx(M_\mcy) =
		\begin{cases}
			\sqrt{\frac{M_\mcy r}{2}}, &\text{if } M_\mcx < \frac{1}{2r} \\
			[0,\frac{1}{2}], &\text{if } M_\mcy = \frac{1}{2r} \\
			0, &\text{if } M_\mcy > \frac{1}{2r} \\
		\end{cases}
	\ee
	An equilibrium is any point $(M_\mcx^*,M_\mcy^*)$ at which $M_\mcy^* = \text{BR}_\mcy(M_\mcx^*)$ and  $M_\mcx^* = \text{BR}_\mcx(M_\mcy^*)$, i.e. where the two graphs of $\text{BR}_\mcy$ and $\text{BR}_\mcx$  intersect. If $r > 1$, then $\text{BR}_\mcx(M_\mcy)$ intersects $\text{BR}_\mcy(M_\mcx)$ uniquely at the point $(\frac{1}{2r},\frac{1}{2r})$. If $r \geq 1$, then they intersect uniquely at the point $(\frac{r}{2},\frac{r}{2})$. 
	
	The full equilibrium investment profile $(\bX^*,\bY^*)$ can then be deduced using the result from Lemma \ref{lem:sunk}.

\end{proof}

\section{The Weighted Contribution Winning Rule}\label{sec:WC}

In this section, we consider an alternate winning rule that we refer to as a \emph{weighted contribution} rule. Here, each resource type is associated with an effectiveness weight $\bs{a} = (a_1,\ldots,a_T) \in \Rp^T$ for player $\mcx$, and $\bs{b} = (b_1,\ldots,b_T) \in \Rp^T$ for player $\mcy$. The winning rule for player $\mcx$ is
\be\label{eq:WC}
    W_{\WC}(\bs{x},\bs{y})  \triangleq \ones\left\{ \sum_{t\in\mct} a_t x_t \geq  \sum_{t\in\mct} b_t y_t \right\}.
\ee
The weighted contribution winning rule $W_{\WC}$ reflects the heterogeneity of resources with respect to their relative effectiveness for winning a contest. 

We state the full equilibrium characterization of the ML game under the weighted contribution rule in the Proposition below.

\begin{proposition}\label{thm:WC}
    Consider a Multi-resource General Lotto game under the weighted contribution winning rule, $\ML(\bX,\bY;W_\WC,\bv)$, with parameters $\{\bs{a},\bs{b}\}$. The unique equilibrium payoff to player $\mcx$ is 
    \begin{equation}
        \pi_\mcx^* =  L(\beta(\bX,\bY) ),
    \end{equation}
    where $\beta(\bX,\bY) \triangleq \frac{\sum_{t=1}^T b_tY_t}{\sum_{t=1}^T a_tX_t}$. The equilibrium payoff to player $\mcy$ is $\pi_\mcy^* = 1 - \pi_\mcx^*$. An equilibrium strategy profile $(F_\mcx^*,F_\mcy^*) \in \F(\bX) \times \F(\bY)$ is given as follows. If $\beta(\bX,\bY) \leq 1$, then
    \be
        \ba
            F_{\mcx,c}^*(\bs{u}) &= \min\left\{ \min\left\{ \frac{u_t}{2 v_c X_t }, 1 \right\} \right\}_{t \in \mct} \\
            F_{\mcy,c}^*(\bs{u}) &= 1-\frac{1}{\beta} + \frac{1}{\beta}\min\left\{ \min\left\{ \frac{u_t}{2 v_c Y_t \beta }, 1 \right\} \right\}_{t \in \mct} \\
        \ea
    \ee
    If $\beta(\bX,\bY) > 1$, then
    \be
        \ba
            F_{\mcx,c}^*(\bs{u}) &= 1-\frac{1}{\beta} +\frac{1}{\beta} \min\left\{ \min\left\{ \frac{u_t}{2 v_c X_t \beta }, 1 \right\} \right\}_{t \in \mct} \\
            F_{\mcy,c}^*(\bs{u}) &= \min\left\{ \min\left\{ \frac{u_t}{2 v_c Y_t  }, 1 \right\} \right\}_{t \in \mct} \\
        \ea.
    \ee
    
    
\end{proposition}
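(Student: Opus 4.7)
The key observation is that the winning rule $W_\WC$ in \eqref{eq:WC} depends on the multi-dimensional allocations $(\bs{x}_c,\bs{y}_c)$ only through the scalar weighted sums $\tilde{x}_c \triangleq \sum_{t\in\mct} a_t x_{c,t}$ and $\tilde{y}_c \triangleq \sum_{t\in\mct} b_t y_{c,t}$, since $W_\WC(\bs{x}_c,\bs{y}_c) = \ones\{\tilde{x}_c \geq \tilde{y}_c\}$. My plan is to reduce $\ML(\bX,\bY;W_\WC,\bv)$ to the classic single-resource General Lotto game $\GL(\tilde{X},\tilde{Y};\bv)$ with effective budgets $\tilde{X} \triangleq \sum_t a_t X_t$ and $\tilde{Y} \triangleq \sum_t b_t Y_t$. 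Theorem \ref{thm:GL} then immediately delivers the value $L(\tilde{Y}/\tilde{X}) = L(\beta)$, and the equilibrium strategies in the reduced game can be lifted back.

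To make this rigorous, I would formalize the reduction via two maps between strategy spaces. First, a \emph{projection} map: any $F_\mcx \in \F(\bX)$ induces a distribution $\tilde{F}_\mcx$ over $(\tilde{x}_1,\ldots,\tilde{x}_C) \in \Rp^C$. Linearity of expectation yields $\E_{\tilde{F}_\mcx}[\sum_c \tilde{x}_c] = \sum_t a_t \E_{F_\mcx}[\sum_c x_{c,t}] \leq \sum_t a_t X_t = \tilde{X}$, so $\tilde{F}_\mcx \in \F(\tilde{X})$, and analogously for $\mcy$. Since the payoffs in both games depend on the sample only through $(\tilde{x}_c,\tilde{y}_c)$, they coincide: $\pi_\mcx^{\ML}(F_\mcx,F_\mcy) = \pi_\mcx^{\GL}(\tilde{F}_\mcx,\tilde{F}_\mcy)$. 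Second, a \emph{lifting} map: given any $\tilde{F}_\mcx \in \F(\tilde{X})$, define the lifted strategy by the deterministic rule $x_{c,t} = (X_t/\tilde{X})\tilde{x}_c$, and similarly $y_{c,t} = (Y_t/\tilde{Y})\tilde{y}_c$. A direct check confirms $\sum_t a_t x_{c,t} = \tilde{x}_c$, $\E[\sum_c x_{c,t}] = (X_t/\tilde{X})\E[\sum_c \tilde{x}_c] \leq X_t$, and that the projection of the lift equals $\tilde{F}_\mcx$ itself.

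With these two maps in hand, the saddle-point property transfers from GL to ML. Let $(\tilde{F}_\mcx^*,\tilde{F}_\mcy^*)$ denote the classic equilibrium of $\GL(\tilde{X},\tilde{Y};\bv)$ provided by Theorem \ref{thm:GL}, and let $(F_\mcx^*,F_\mcy^*)$ be their lifts. For any admissible $F_\mcx \in \F(\bX)$, projecting $F_\mcx$ to $\tilde{F}_\mcx \in \F(\tilde{X})$ gives $\pi_\mcx^{\ML}(F_\mcx,F_\mcy^*) = \pi_\mcx^{\GL}(\tilde{F}_\mcx,\tilde{F}_\mcy^*) \leq L(\beta) = \pi_\mcx^{\ML}(F_\mcx^*,F_\mcy^*)$, and the symmetric inequality holds on the $\mcy$ side. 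A parallel argument shows that the value of any ML equilibrium must equal $L(\beta)$: its projection is a saddle point in the GL game (using the lifting map to convert hypothetical GL deviations back into ML deviations), and Theorem \ref{thm:GL} provides uniqueness of the GL value.

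The remaining task is to compute the stated CDFs by substituting the classic equilibrium marginals into the lift. For instance, when $\beta \leq 1$, the classic $\tilde{x}_c$ is uniform on $[0,2v_c\tilde{X}]$, so the lift is $x_{c,t} = 2 v_c X_t \, U$ with a single $U \sim \text{Unif}[0,1]$ shared across $t$; the joint CDF becomes $F_{\mcx,c}^*(\bs{u}) = \P(U \leq u_t/(2 v_c X_t)\text{ for all }t) = \min\{\min_t u_t/(2 v_c X_t),\,1\}$, matching the stated expression. For $\mcy$ the classic strategy contributes a point mass at zero plus a uniform piece, and both carry through the lift via the factor $Y_t/\tilde{Y}$. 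The symmetric case $\beta > 1$ is handled by swapping the roles of the two players. The main obstacle, such as it is, is pure bookkeeping: verifying that the scaling factors $X_t/\tilde{X}$ and $Y_t/\tilde{Y}$ under the lift reproduce the precise coefficients in the stated CDFs. Beyond this, the reduction carries all the conceptual weight of the proof.
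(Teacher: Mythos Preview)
Your reduction argument is correct and constitutes a genuinely different route from the paper's. The paper does not prove Proposition~\ref{thm:WC} independently; it states that the proof ``largely follows the same techniques'' as Theorem~\ref{thm:WL}, i.e., one reruns the upper-bound/lower-bound machinery of Lemmas~\ref{lem:UB}--\ref{lem:max_LB} with the CDF $\mbb{P}_\mcy[\sum_t b_t y_{c,t} \leq \sum_t a_t x_{c,t}]$ in place of $\mbb{P}_\mcy[y_{c,t} \leq x_{c,t},\,\forall t]$. Your approach instead exploits the special structure of $W_\WC$---that it depends on the allocations only through the scalar weighted sums $\tilde{x}_c,\tilde{y}_c$---to collapse $\ML(\bX,\bY;W_\WC,\bv)$ onto $\GL(\tilde{X},\tilde{Y};\bv)$ via the projection/lift pair, and then invoke Theorem~\ref{thm:GL} directly. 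This is more economical: it produces both the value and the equilibrium strategies in one stroke, without repeating any of the Section~\ref{sec:proofs} analysis. The trade-off is that your argument is specific to $W_\WC$ and says nothing about how to handle non-linear winning rules, whereas the paper's Lemma sequence was built for the weakest-link rule (where no scalar reduction exists) and is being reused here to emphasize that the same template applies. One caution on the final bookkeeping step: when you carry out the lift for $F_{\mcy,c}^*$ in the case $\beta\leq 1$, the classic weak-player marginal has atom $1-\beta$ at zero and uniform mass on $[0,2v_c\tilde{X}]$, so the lifted CDF you obtain is $(1-\beta)+\beta\min\{\min_t \beta u_t/(2v_c Y_t),1\}$; be prepared for this not to match the formula as printed in the statement, which appears to have $\beta$ and $1/\beta$ interchanged in that case.
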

Unlike the weakest-link formulation, player $\mcx$ can attain a non-zero payoff by only possessing a single resource type. The above result is comparable to the classic, single-resource General Lotto game where the ``effective total budgets" are given by the linear combinations $X=\sum_{t=1}^T a_t X_t$ and $Y=\sum_{t=1}^T b_t Y_t$. Indeed, observe that the equilibrium payoffs coincide with that of the classic single-resource game $L(Y/X)$ (Theorem \ref{thm:GL}) using the effective total budgets $X,Y$.


We omit a full proof of this result since it largely follows the same techniques that we develop to establish the proof of Theorem \ref{thm:WL} in Section \ref{sec:proofs}. In particular, one can establish the proof using the same sequence of Lemmas stated in Section \ref{sec:proofs}. The main difference is that instead of applying the CDF $\mbb{P}_{\mcy}[ y_{c,t} \leq x_{c,t}, \forall t]$ in the calculations of expected utility (i.e. starting in Lemma \ref{lem:UB}, \eqref{eq:UB_calculation}), we apply the CDF $\mbb{P}_{\mcy}[  \sum_{t\in\mct} b_t y_{c,t} \leq \sum_{t\in\mct} a_t x_{c,t}]$ that corresponds to the winning rule $W_{\text{WC}}$.

\section{Equilibrium Analysis and Derivations}\label{sec:proofs}

In this section, we establish the proofs for the main result (Theorem \ref{thm:WL}) regarding equilibrium characterizations of the Multi-resource game, $\ML(\bX,\bY;W_{\text{WL}},\bv)$ under the weakest-link/best-shot winning rules. Below, we describe a rough outline of the technical approach we take to prove Theorem \ref{thm:WL}.
\begin{itemize}[leftmargin=*]
	\item We propose and define two classes of randomized allocation strategies, $\F_\WL$ for the player with weakest-link rule ($\mcx$), and $\F_\BS$ for the player with the best-shot rule ($\mcy$). They are given in Definitions \ref{def:CAS_X} and \ref{def:CAS_Y}, and are strict subsets of the strategy set $\F$ \eqref{eq:GL_admissible}.
	\item In Lemmas \ref{lem:UB} and \ref{lem:min_UB}, we establish the smallest upper-bound that player $\mcy$ can impose on the payoff of player $\mcx$ using strategies from $\F_\BS$.
	\item In Lemmas \ref{lem:LB} and \ref{lem:max_LB}, we establish the largest lower-bound that player $\mcx$ can ensure on its own payoff using strategies from $\F_\WL$.
	\item We find that both the upper and lower bounds coincide in value. Because $\ML(\bX,\bY;W_\WL,\bv)$ is a two-player constant-sum game, this establishes an equlibrium strategy profile.
\end{itemize}

We begin  by observing that one can write the expected payoff from \eqref{eq:EU_ML} in terms of marginal distributions of the players' strategies.
\begin{lemma}
	Consider a Multi-resource Lotto game $\ML(\bX,\bY;W,\bv)$. Then for any strategy profile $(F_\mcx,F_\mcy) \in \F(\bX)\times\F(\bY)$,
	\be\label{eq:EU_ML_marginal}
		\pi_\mcx(F_\mcx,F_\mcy) = \sum_{c\in\mcc} v_c \cdot \E_{\substack{\bx_c\sim F_{\mcx,c} \\ \by_c \sim F_{\mcy,c}}}\left[ W(\bs{x}_c,\bs{y}_c) \right].
	\ee
	where for any $\bu_c \in \Rp^T$, the $c$-\emph{marginal distribution} $F_{\mcx,c}$ is given by
	\be
		\ba
			F_{\mcx,c}(\bu_c) &\triangleq \P[x_{c,t} \leq u_{c,t}, \ \forall t\in\mct] \\
			&= \lim_{\bu_d \rightarrow \infty^T, \forall d \neq c} F_\mcx((\bu_d)_{d\in\mcc})
		\ea
	\ee
	Similar definitions apply for $F_{\mcy,c}$.
\end{lemma}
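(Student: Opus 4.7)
The plan is to reduce the claim to a direct application of linearity of expectation combined with the definition of the marginal distributions $F_{\mcx,c}$ and $F_{\mcy,c}$. By \eqref{eq:EU_ML}, the quantity $\pi_\mcx(F_\mcx,F_\mcy)$ is the expectation, under the independent product $F_\mcx \otimes F_\mcy$, of the finite sum $\sum_{c\in\mcc} v_c W(\bx_c,\by_c)$. Because each summand depends on $\btx$ and $\bty$ only through their $c$-th blocks $\bx_c$ and $\by_c$, marginalizing out the remaining coordinate blocks cannot affect any individual summand's expectation.

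Concretely, I would proceed in two short steps. First, pull the finite sum over contests outside the expectation by linearity, yielding
\begin{equation*}
    \pi_\mcx(F_\mcx,F_\mcy) = \sum_{c\in\mcc} v_c \cdot \E_{\btx\sim F_\mcx,\,\bty\sim F_\mcy}\bigl[ W(\bx_c,\by_c) \bigr].
\end{equation*}
Second, for each fixed $c$, observe that $W(\bx_c,\by_c)$ depends on the full allocations only through the coordinate projections $\btx \mapsto \bx_c$ and $\bty \mapsto \by_c$. By the definition of the $c$-marginal given in the lemma statement, the pushforward of $F_\mcx$ under $\btx \mapsto \bx_c$ is precisely $F_{\mcx,c}$, and analogously for $F_\mcy$ and $F_{\mcy,c}$. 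Because the two players' allocations are drawn independently (each samples from its own strategy distribution without observing the opponent), the joint law of $(\bx_c,\by_c)$ is $F_{\mcx,c} \otimes F_{\mcy,c}$, so the inner expectation coincides with $\E_{\bx_c\sim F_{\mcx,c},\,\by_c\sim F_{\mcy,c}}[W(\bx_c,\by_c)]$, which is the right-hand side of \eqref{eq:EU_ML_marginal}.

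There is no real technical obstacle: the statement is a routine measure-theoretic bookkeeping lemma whose role is to enable the subsequent analyses (starting with Lemmas \ref{lem:UB} and \ref{lem:LB}) to be phrased entirely in terms of the per-contest marginals $F_{\mcx,c}, F_{\mcy,c}$ rather than full joint distributions on $\Rp^{CT}$. The only point worth flagging explicitly in the write-up is the independence of $\btx$ and $\bty$, which is implicit in the simultaneous-move game model of Section \ref{sec:model} and in Definition \ref{def:equil}.
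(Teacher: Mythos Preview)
Your proposal is correct and matches the paper's own treatment: the paper does not give a formal proof of this lemma but simply remarks that each expectation term in \eqref{eq:EU_ML_marginal} is equivalent to an expectation with respect to the $T$-variate marginal distributions for each contest, which is precisely the linearity-plus-marginalization argument you spell out. Your write-up is more detailed than the paper's, but the underlying idea is identical.
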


Each expectation term in \eqref{eq:EU_ML_marginal} is equivalent to an expectation with respect to the $T$-variate \emph{marginal distributions} for each contest $c\in\mcc$. Therefore, any full  strategy $F_\mcx \in \F(\bX)$ can be specified in terms of its $c$-marginals, $\{F_{\mcx,c}\}_{c\in\mcc}$.

\subsection{Proposed strategies for weakest-link formulation}

Let us consider the multi-resource game corresponding to the weakest-link winning rule, $\ML(\bX,\bY;W_\WL,\bv)$. For player $\mcx$, who has the weakest-link rule, we propose a particular class of strategies defined below.
\begin{definition}\label{def:CAS_X}
	We define the class of strategies $\F_\WL(\bX) \subset \F(\bX)$ as follows. Any $\hat{F}_\mcx \in \F_\WL(\bX)$ has $c$-marginals that can be written in the form
	\be\label{eq:CAS_X}
		\hat{F}_{\mcx,c}(\bu) = 1 - \dx + \dx\cdot \min\left\{ \min\left\{ \frac{\dx}{2 v_c X_t }\cdot u_t, 1 \right\} \right\}_{t \in \mct}.
	\ee
	for all $c\in\mcc$, $\bu \in \Rp^T$. Here, the number $\dx \in [0,1]$ parameterizes this class of strategies.
\end{definition}
The strategy \eqref{eq:CAS_X} can intuitively be described as follows. To draw a random allocation $\bx_c \in \Rp^T$ from $\hat{F}_{\mcx,c}$, 
\ben
    \i With probability $1-\dx$: allocate zero resources, i.e. $x_{c,t} = 0$ for all $t \in \mct$.
    \i With probability $\dx$: Draw a random $U \sim \text{Unif}[0,1]$. Then, allocate $x_{c,t} = \frac{2v_c X_t}{\dx}\cdot U$ for each $t \in \mct$.
\een
The strategy $\hat{F}_{\mcx,c}$ either allocates zero resources of all types to contest $c$, or allocates a positive amount of every resource type. This strategy allocates $\E_{\bx_c\sim F_{\mcx,c}}[x_{c,t}] = v_c X_t$ resources of type $t$ to contest $c$ in expectation, and thus it is budget-feasible.

For player $\mcy$, who has the best-shot rule, we propose a particular class of strategies defined below.
\begin{definition}\label{def:CAS_Y}
	We define the class of strategies $\F_\BS(\bY) \subset \F(\bY)$ as follows. Any $\hat{F}_\mcx \in \F_\BS(\bY)$ has $c$-marginals that can be written in the form
	\be\label{eq:CAS_Y}
	    \hat{F}_{\mcy,c}(\bu) = 1-\dy + \dy \sum_{t \in \mct} p_t \cdot \min\left\{ \frac{\dy p_t}{2 v_c Y_t }\cdot u_t, 1 \right\}.
	\ee
	for all $c\in\mcc$, $\bu \in \Rp^T$. Here, the number $\dy \in [0,1]$ and probability vector $\bs{p} = (p_t)_{t\in\mct}$ parameterize this class of strategies.
\end{definition}
The strategy \eqref{eq:CAS_Y} can intuitively be described as follows. To draw a random allocation $\by_c \in \Rp^T$ from $\hat{F}_{\mcy,c}$, 
\ben
    \i With probability $1-\dy$: allocate zero resources, i.e. $y_{c,t} = 0$ for all $t \in \mct$.
    \i With probability $\dy$: Select a type $t\in\mct$ with probability $p_t$. Draw a random $U \sim \text{Unif}[0,1]$. Then allocate $y_{c,t} = \frac{2Y_t v_c}{\dx p_t}\cdot U$ and allocate $y_{c,t'} = 0$ for all $t' \neq t$.
\een
The strategy $\hat{F}_{\mcy,c}$ either allocates zero resources of all types to contest $c$, or allocates a positive amount of a single type of resource. This strategy allocates $\E_{\by_c\sim \hat{F}_{\mcy,c}}[y_{c,t}] = v_c Y_t$ resources of type $t$ to contest $c$ in expectation, and thus it is budget-feasible.

\subsection{Proof of Theorem \ref{thm:WL}}

We provide a series of Lemmas in order to establish Theorem \ref{thm:WL}. The following Lemma establishes an upper bound on player $\mcx$ when player $\mcy$ uses any $\hat{F}_\mcy \in \F_\BS(\bY)$.

\begin{lemma}\label{lem:UB}
    Consider $\ML(\bX,\bY;W_\WL,\bv)$. Suppose $\hat{F}_\mcy \in \F_\BS(\bY)$ is a strategy of the form \eqref{eq:CAS_Y}. Then it holds that for any $F_\mcx \in \F(\bX)$,
    \be
        \pi_\mcx(F_\mcx,\hat{F}_\mcy) \leq \UB(\dy,\bs{p})
    \ee
    where $\dy \in [0,1]$ and $\bs{p} \in \Delta(\mct)$ are the parameters of $\hat{F}_\mcy$, and
    \be
    	\UB(\dy,\bs{p}) \triangleq 1-\dy + \frac{\dy^2}{2} \cdot \left(  \sum_{t \in \mct} p_t^2 \frac{X_t}{Y_t} \right).
    \ee
\end{lemma}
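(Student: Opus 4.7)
The plan is to start from the marginal-expectation form of the expected payoff in \eqref{eq:EU_ML_marginal} and, for the weakest-link rule, rewrite the inner conditional expectation as an evaluation of the $c$-marginal CDF of $\hat{F}_\mcy$. Specifically, fixing $\bx_c$, I would note that
\[
\E_{\by_c \sim \hat{F}_{\mcy,c}}\bigl[W_\WL(\bx_c,\by_c)\bigr] = \P_{\by_c}\bigl[y_{c,t} \leq x_{c,t},\ \forall t \in \mct\bigr] = \hat{F}_{\mcy,c}(\bx_c),
\]
so that $\pi_\mcx(F_\mcx,\hat{F}_\mcy) = \sum_{c \in \mcc} v_c \cdot \E_{\bx_c \sim F_{\mcx,c}}[\hat{F}_{\mcy,c}(\bx_c)]$.

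Next, I would plug in the explicit form \eqref{eq:CAS_Y} of $\hat{F}_{\mcy,c}$, giving
\[
\pi_\mcx(F_\mcx,\hat{F}_\mcy) = (1 - \dy)\sum_{c} v_c + \dy \sum_{c} v_c \sum_{t \in \mct} p_t \cdot \E_{\bx_c}\!\left[\min\!\left\{\tfrac{\dy p_t}{2 v_c Y_t} x_{c,t},\,1\right\}\right].
\]
Since $\sum_c v_c = 1$, the first term is $1-\dy$. For the second, the key bound is the elementary inequality $\min\{z,1\} \leq z$ for $z \geq 0$, which linearizes each summand and pulls the expectation inside:
\[
\E_{\bx_c}\!\left[\min\!\left\{\tfrac{\dy p_t}{2 v_c Y_t} x_{c,t},\,1\right\}\right] \leq \tfrac{\dy p_t}{2 v_c Y_t}\,\E_{\bx_c}[x_{c,t}].
\]

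Substituting this bound, the $v_c$ in the denominator cancels with the $v_c$ in front, and swapping the order of summation leaves $\sum_{c \in \mcc} \E[x_{c,t}]$, which by the budget feasibility constraint \eqref{eq:ML_admissible} for $F_\mcx \in \F(\bX)$ is at most $X_t$. Collecting terms yields exactly
\[
\pi_\mcx(F_\mcx,\hat{F}_\mcy) \leq 1 - \dy + \tfrac{\dy^2}{2}\sum_{t \in \mct} p_t^2 \tfrac{X_t}{Y_t} = \UB(\dy,\bs{p}),
\]
which is the claim. I do not foresee a serious obstacle here: the argument is essentially a CDF substitution followed by linearization of the $\min$ and invocation of the budget constraint. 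The only subtle step is verifying the first identity rewriting the weakest-link indicator expectation as $\hat{F}_{\mcy,c}(\bx_c)$, which requires that the event $\{y_{c,t} \leq x_{c,t},\ \forall t\}$ coincides (up to the measure-zero boundary) with the joint CDF evaluated at $\bx_c$; this is immediate from the definition of the $c$-marginal and does not depend on the specific form \eqref{eq:CAS_Y}.
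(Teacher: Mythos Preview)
Your proposal is correct and follows essentially the same approach as the paper: rewrite the weakest-link indicator expectation as $\hat{F}_{\mcy,c}(\bx_c)$, substitute the explicit form \eqref{eq:CAS_Y}, linearize via $\min\{z,1\}\leq z$, and then invoke the budget constraint \eqref{eq:ML_admissible} after swapping sums. The only cosmetic difference is that the paper bounds each contest's contribution first and then sums, whereas you write the full payoff before bounding; also note that since $\mcx$ wins ties under $W_\WL$, the CDF identity is in fact exact rather than merely up to measure zero.
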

\begin{proof}
	For any $F_\mcx \in \F(X)$, we have that the payoff in contest $c$ is
	\be\label{eq:UB_calculation}
		\ba
			&\pi_{\mcx,c}(F_{\mcx,c},\hat{F}_{\mcy,c}) :=\E_{F_{\mcx,c}, \hat{F}_{\mcy,c}}\left[ \mathds{1}_{\{x_{c,t} \geq y_{c,t}, \forall t \in \mct \}} \right] \\
			&= \E_{F_{\mcx,c}}\left[ \hat{F}_{\mcy,c}(x) \right] \\
			&= 1-\dy + \dy \sum_{t \in \mct} p_t \cdot \E_{F_{\mcx,c}}\left[ \min\left\{ \frac{\dy p_t}{2 v_c Y_t }\cdot x_{c,t}, 1 \right\} \right] \\
			&\leq 1-\dy + \dy \sum_{t \in \mct} p_t \cdot \E_{F_{\mcx,c}}\left[ \frac{\dy p_t}{2 v_c Y_t }\cdot x_{c,t} \right] \\
			&= 1-\dy + \frac{\dy^2}{2} \cdot \left(  \sum_{t \in \mct} p_t^2 \frac{X_{c,t}}{v_c Y_t} \right) \\
		\ea
	\ee
	
	Here, $X_{c,t} := \E_{F_{\mcx,c}}[x_{c,t}]$. The inequality is due to the fact that $\min\{n_1,n_2\} \leq n_1$ for any two numbers $n_1,n_2$. The total payoff can then be written
	\be
		\ba
			&\pi_\mcx(F_\mcx,\hat{F}_\mcy) = \sum_{c\in\mcc} v_c \cdot \pi_{\mcx,c}(F_{\mcx,c},\hat{F}_{\mcy,c}) \\
			&\leq 1-\dy +\frac{\dy^2}{2}  \sum_{c\in\mcc} v_c \left(  \sum_{t \in \mct} p_t^2 \frac{X_{c,t}}{v_c Y_t} \right) \\
			&= 1-\dy +\frac{\dy^2}{2}  \sum_{t \in \mct} p_t^2\frac{1}{Y_t} \sum_{c\in\mcc} X_{c,t} \\
			&\leq 1 - \dy + \frac{\dy^2}{2}  \sum_{t \in \mct}   p_t^2 \frac{X_t}{Y_t}.
		\ea
	\ee
	The last inequality follows from \eqref{eq:ML_admissible}.
\end{proof}

The next Lemma provides the smallest upper bound that player $\mcy$ can impose by using strategies $\hat{F}_\mcy$ of the form \eqref{eq:CAS_Y}.
\begin{lemma}\label{lem:min_UB}
	Consider $\ML(\bX,\bY;W_\WL,\bv)$. Then,
	\be\label{eq:UB_OPT}
		\min_{\substack{\dy \in [0,1] \\ \bs{p} \in \Delta(\mct) } } \UB(\dy,\bs{p}) = L(\alpha(\bX,\bY)).
	\ee
	The strategy $F_\mcy^* \in \F(\bY)$ that imposes the bound upper bound  $L(\alpha(\bX,\bY))$ is given as follows. 
	If $\alpha(\bX,\bY) \leq 1$, then
    	\be
       		\ba
	            F_{\mcy,c}^*(\bs{u}) &= 1-\alpha(\bX,\bY) \\
	            &\quad + \alpha(\bX,\bY) \sum_{t \in \mct} \frac{Y_t/X_t}{\alpha(\bX,\bY)} \cdot \min\left\{ \frac{u_t}{2 v_c X_t }, 1 \right\}. \\
        		\ea
    	\ee
    	for any $c \in \mcc$ and $\bu\in\Rp^T$. If $\alpha(\bX,\bY) > 1$, then
	\be
	        \ba
	            F_{\mcy,c}^*(\bs{u}) &= \sum_{t \in \mct} \frac{Y_t/X_t}{\alpha(\bX,\bY)} \cdot \min\left\{ \frac{ u_t}{2 v_c X_t \alpha(\bX,\bY)}, 1 \right\}. \\
	        \ea
	\ee
	for any $c \in \mcc$ and $\bu\in\Rp^T$.
\end{lemma}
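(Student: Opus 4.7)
The plan is a two-step nested minimization of $\UB(\dy,\bs{p}) = 1 - \dy + \frac{\dy^2}{2}\sum_{t\in\mct} p_t^2 \frac{X_t}{Y_t}$: first I hold $\dy$ fixed and optimize over $\bs{p}\in\Delta(\mct)$, then I optimize the resulting single-variable expression over $\dy\in[0,1]$, and finally I extract $F_\mcy^*$ by inserting the optimal parameters into the parameterization \eqref{eq:CAS_Y}.

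For the inner minimization, since the coefficient $\dy^2/2$ is nonnegative it suffices to minimize the strictly convex quadratic $\bs{p}\mapsto\sum_{t\in\mct} p_t^2\,X_t/Y_t$ over the probability simplex $\Delta(\mct)$. A Lagrangian argument with multiplier for the constraint $\sum_t p_t = 1$ yields $p_t^* \propto Y_t/X_t$; normalizing gives $p_t^* = (Y_t/X_t)/\alpha(\bX,\bY)$, and direct substitution yields $\sum_t (p_t^*)^2 X_t/Y_t = 1/\alpha(\bX,\bY)$. (Nonnegativity of $p_t^*$ is automatic since all $X_t,Y_t>0$.)

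Plugging this back reduces the problem to minimizing $g(\dy) := 1 - \dy + \dy^2/(2\alpha)$ over $\dy\in[0,1]$, where $\alpha\equiv\alpha(\bX,\bY)$. The unconstrained minimizer of this strictly convex quadratic is $\dy = \alpha$. If $\alpha\leq 1$, this lies in $[0,1]$ and gives $g(\alpha) = 1 - \alpha/2 = L(\alpha)$; if $\alpha>1$, then $g$ is strictly decreasing on $[0,1]$ (since $g'(1) = -1 + 1/\alpha < 0$), so the constrained minimum is attained at $\dy^*=1$ with $g(1) = 1/(2\alpha) = L(\alpha)$. In both regimes the minimum equals $L(\alpha)$, which establishes \eqref{eq:UB_OPT}. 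The explicit form of $F_\mcy^*$ then follows by inserting $(\dy^*,\bs{p}^*)$ into \eqref{eq:CAS_Y}: when $\alpha\leq 1$ the factor $\dy^* p_t^*/(2 v_c Y_t)$ simplifies to $1/(2 v_c X_t)$, producing the first stated formula, and when $\alpha>1$ it simplifies to $1/(2 v_c X_t \alpha)$, producing the second.

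The only delicate part is the projection of the unconstrained minimizer $\dy=\alpha$ onto $[0,1]$, which produces the two cases mirroring the threshold behavior of $L(\cdot)$. Since the two optimal distributions agree continuously at $\alpha=1$ (both yield the upper-bound value $1/2$), no gluing argument is needed, and the remainder of the argument is elementary convex optimization combined with direct algebraic simplification.
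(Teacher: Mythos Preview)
Your proof is correct. The approach differs from the paper's in one instructive way: you swap the order of the nested minimization. The paper first minimizes over $\dy$ for fixed $\bs{p}$, obtaining $\dy^*(\bs{p}) = \min\{1/g(\bs{p}),1\}$ with $g(\bs{p}) = \sum_t p_t^2 X_t/Y_t$, and then must perform a case analysis on whether $g(\bs{p}) \geq 1$ or $g(\bs{p}) < 1$ to handle the outer minimization over $\bs{p}$. Your order---first minimize over $\bs{p}$ for fixed $\dy$, then over $\dy$---is cleaner precisely because the inner minimizer $\bs{p}^*$ is independent of $\dy$ (the $\dy^2/2$ prefactor is a nonnegative scalar that does not affect the argmin on the simplex), so the problem collapses immediately to the single scalar quadratic $1 - \dy + \dy^2/(2\alpha)$ without any intermediate case splitting on $g$. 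Both routes of course land on the same optimizers $(\dy^*,\bs{p}^*)$ and the same substitution into \eqref{eq:CAS_Y}; yours simply avoids the detour through the piecewise form of $\UB(\dy^*(\bs{p}),\bs{p})$.
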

\begin{proof}
	The optimization problem on the left-hand side of \eqref{eq:UB_OPT} can be sequenced as
	\be
	        	\UB^* = \min_{\bs{p}\in\Delta(\mct)} \min_{\dy\in[0,1]} \UB(\dy,\bs{p}).
	\ee
	Observing that $\UB(\dy,\bs{p})$ is a convex quadratic function in $\dy \in [0,1]$, the minimizing $\dy^*$ for the inner minimization must satisfy
    \be
        \dy^* = \min\left\{\frac{1}{g(\bs{p})}, 1 \right\}
    \ee
    for any fixed $\bs{p}$, where $g(\bs{p}) := \sum_{t \in \mct} p_t^2 \frac{X_t}{Y_t}$. Then, the optimization problem can be written as
    \be\label{eq:reduced_OPT}
    	\ba
    		&\min_{\bs{p}\in\Delta(\mct)} \UB(\dy^*,\bs{p}) = \\
        		&\min_{\bs{p}\in\Delta(\mct)} \left( 1 - \min\left\{\frac{1}{g(\bs{p})}, 1 \right\} + \frac{1}{2} \min\left\{\frac{1}{g(\bs{p})}, 1 \right\}^2 g(\bs{p}) \right) \\
        	\ea
    \ee
    We now observe that $g(p)$ is a convex function on $\bs{p} \in \Delta(\mct)$, and attains its minimum at the point $\bs{p}^*$ whose entries are
    \be
        p^*_t = \frac{Y_t/X_t}{\alpha}, \ t\in\mct,
    \ee
    where $\alpha = \sum_{t'\in\mct} \frac{Y_{t'}}{X_{t'}}$ as defined in Theorem \ref{thm:WL}. We then have $g(\bs{p}^*) = 1/\alpha$.

    \noindent\textbf{Case 1:} $\alpha \leq 1$. It holds that $g(\bs{p}^*) \geq 1$, and thus $g(\bs{p}) \geq 1$ for all $\bs{p} \in \Delta(\mct)$. The optimization problem \eqref{eq:reduced_OPT} is
    \be
        \min_{\bs{p}\in\Delta(\mct)} \UB(\dy^*,\bs{p}) =\min_{\bs{p}\in\Delta(\mct)} \left( 1 - \frac{1}{2 g(\bs{p})}  \right)
    \ee
    The minimizer  is identical to the minimizer of $g(\bs{p})$, which is just $\bs{p}^*$.  Therefore, we have $U_\mcx^* = 1 - \frac{\alpha}{2}$, with $\dy^* = \alpha$.
    
    \noindent\textbf{Case 2:} $\alpha > 1$. It holds that $g(\bs{p}^*) < 1$. We can write \eqref{eq:reduced_OPT} as
    \be
        \min_{\bs{p}\in\Delta(\mct)} \UB(\dy^*,\bs{p}) =
        \begin{cases}
            \frac{g(\bs{p})}{2}, &\text{if } g(\bs{p}) < 1 \\
            1 - \frac{1}{2g(\bs{p})}, &\text{if } g(\bs{p}) \geq 1 \\
        \end{cases}
    \ee
    Observe that $\bs{p}^*$ is the minimizer in the region $g(\bs{p}) < 1$. We can further deduce that $\bs{p}^*$ also minimizes $\UB(\dy^*,\bs{p})$ over the entire simplex $\Delta(\mct)$, since $\UB(\dy^*,\bs{p}) \geq \frac{1}{2} \geq \frac{g(\bs{p}^*)}{2}$ for any $p \in \Delta(\mct)$ such that $g(p) \geq 1$. Thus, we have $\UB^* = \frac{1}{2\alpha}$, with $\dy^* = 1$.
   
   We then obtain $\UB^* =L(\alpha(\bX,\bY))$, and the form of the strategy $F_\mcy^*$ follows from \eqref{eq:CAS_Y}.

\end{proof}

We note that the smallest upper bound from Lemma \ref{lem:UB} coincides with the reported equilibrium payoff in Theorem \ref{thm:WL}. Our next Lemma establishes lower bounds.

\begin{lemma}\label{lem:LB}
    Consider $\ML(\bX,\bY;W_\WL,\bv)$. Suppose $\hat{F}_\mcx \in \F(\bX)$ is a strategy of the form \eqref{eq:CAS_X}. Then it holds that for any $F_\mcy \in \F(\bY)$,
    \be
        \pi_\mcx(\hat{F}_\mcx,F_\mcy) \geq \LB(\dx)
    \ee
    where $\dx \in [0,1]$ is the parameter of $\hat{F}_\mcx$, and
    \be
    	\LB(\dx) \triangleq \dx\left(1 - \frac{\dx}{2}\alpha(\bX,\bY) \right)
    \ee
\end{lemma}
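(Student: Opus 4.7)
The plan is to mirror the structure of Lemma \ref{lem:UB}, but now integrate out $\bx_c$ first and express the contest-$c$ winning probability as a functional of $\by_c$. Fix $c \in \mcc$ and $\by_c \in \Rp^T$. By the sampling description of $\hat{F}_{\mcx,c}$ (allocate $\mathbf{0}$ with probability $1-\dx$; otherwise draw a single $U \sim \mathrm{Unif}[0,1]$ and set $x_{c,t} = \tfrac{2 v_c X_t}{\dx}\, U$ for every $t \in \mct$), a direct computation gives
\[
\P_{\hat{F}_{\mcx,c}}\bigl[\, x_{c,t} \geq y_{c,t}\ \forall t \,\bigr] \;=\; \dx \cdot \max\!\left\{ 1 - \max_{t\in\mct} \frac{\dx\, y_{c,t}}{2 v_c X_t},\; 0 \right\},
\]
since the common-$U$ coupling reduces ``win on all types'' to the single event $U \geq \max_t \tfrac{\dx\, y_{c,t}}{2 v_c X_t}$, and the zero-allocation branch contributes only on the measure-zero event $\by_c = \mathbf{0}$.

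The two key bounds come next. First, drop the outer clipping using $\max\{1-a,0\} \geq 1-a$; this preserves the direction of inequality since we are lower-bounding. Second, replace the inner $\max_t$ by $\sum_t$ via $\max_t a_t \leq \sum_t a_t$ on nonnegative arguments, yielding
\[
\P_{\hat{F}_{\mcx,c}}\bigl[\, x_{c,t} \geq y_{c,t}\ \forall t \,\bigr] \;\geq\; \dx \;-\; \frac{\dx^{2}}{2 v_c}\sum_{t\in\mct} \frac{y_{c,t}}{X_t}.
\]

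Finally, take expectation in $\by_c \sim F_{\mcy,c}$, multiply by $v_c$, and sum over $c\in\mcc$. The $v_c$ in the second term cancels, so the double sum collapses to $\sum_{t\in\mct} \tfrac{1}{X_t}\sum_{c\in\mcc} \E_{F_{\mcy,c}}[y_{c,t}]$. The budget feasibility of $F_\mcy$ from \eqref{eq:ML_admissible} then gives $\sum_c \E[y_{c,t}] \leq Y_t$ for every $t$, which produces
\[
\pi_\mcx(\hat{F}_\mcx, F_\mcy) \;\geq\; \dx \;-\; \frac{\dx^{2}}{2}\sum_{t\in\mct} \frac{Y_t}{X_t} \;=\; \dx\!\left(1 - \tfrac{\dx}{2}\,\alpha(\bX,\bY)\right) \;=\; \LB(\dx).
\]

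The only real subtlety is the $\max_t \leq \sum_t$ step: this looks lossy in principle, but it is exactly what converts the weakest-link joint-winning event into the additive ratio $\alpha(\bX,\bY) = \sum_t Y_t/X_t$, and it will ultimately be tight after maximizing $\LB(\dx)$ over $\dx \in [0,1]$, because the resulting value matches the smallest upper bound from Lemma \ref{lem:min_UB} — the saddle being realized at $\dx^* = \min\{1/\alpha(\bX,\bY),\, 1\}$, which mirrors the optimal $\dy^*$ in the upper-bound analysis.
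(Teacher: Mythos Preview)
Your argument is correct and takes a genuinely different route than the paper. The paper computes $\pi_{\mcy,c}$ via the complement, expands $\P_{\hat F_{\mcx,c}}[\exists t: x_{c,t} < y_{c,t}]$ by inclusion--exclusion over subsets $\mcs \subseteq \mct$, and then invokes a separate technical lemma (Lemma~\ref{lem:inc_exc}) to collapse the alternating sum $\sum_{\mcs}(-1)^{|\mcs|+1}\min_{t\in\mcs} m_{\mcx,c,t}(y_t)$ to $\max_{t\in\mct} m_{\mcx,c,t}(y_t)$. Only after that do they apply the same $\max \leq \sum$ bound you use. You instead exploit the common-$U$ coupling in the sampling description of $\hat F_{\mcx,c}$ to read off the joint winning probability in one line, bypassing inclusion--exclusion and Lemma~\ref{lem:inc_exc} entirely. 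Your route is strictly more elementary and makes clear that the only lossy step is $\max_t \leq \sum_t$; the paper's route works purely from the CDF \eqref{eq:CAS_X} without appealing to the sampling construction, which is perhaps more self-contained but at the cost of extra machinery.

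One minor imprecision: your first display is written as an equality, but the zero-allocation branch contributes $(1-\dx)\,\ones\{\by_c = \mathbf 0\}$, and $\{\by_c = \mathbf 0\}$ need not have $F_{\mcy,c}$-measure zero for an arbitrary $F_\mcy \in \F(\bY)$. The fix is simply to write $\geq$ there (dropping a nonnegative term), which is all you need for the lower bound; the rest of the argument is unaffected.
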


\begin{proof}

   	Suppose $\mcx$ utilizes a strategy $\hat{F}_\mcx$ of the form \eqref{eq:CAS_X}. For any $F_\mcy \in \F(Y)$, the payoff to $\mcy$ in contest $c$ is 
    	\be
    		\ba
	    		 &\pi_{\mcy,c}(\hat{F}_{\mcx,c},F_{\mcy,c}) :=  1 - \E_{\hat{F}_{\mcx,c}, F_{\mcy,c}}\left[ \mathds{1}_{\{x_{c,t} \geq y_{c,t}, \text{ for all } t \in \mct \}} \right] \\
	    		 &= 1 - \E_{\hat{F}_{\mcy,c} } \left[ \P_{\hat{F}_{\mcx,c}}\left( x_{c,t} \geq y_{c,t}, \text{ for all } t \in \mct \right) \right] \\
	 		&= 1 - \E_{\hat{F}_{\mcy,c} } \left[1 - \sum_{\mcs \subseteq \mct, |\mcs| \geq 1} (-1)^{|\mcs|+1}  \hat{F}_{\mcx,c,\mcs}(\by_c) \right]
    		\ea
   	 \ee
    	The equality above is due to the Inclusion-Exclusion principle, and $F_{\mcx,c,\mcs}(\by_c) = \lim_{y_{c,d} \rightarrow \infty, \forall d\notin\mcs}$
    	\be
	    	 \hat{F}_{\mcx,c,\mcs}(\by_c) := \lim_{y_{c,d} \rightarrow \infty, \forall d\notin\mcs} \hat{F}_{\mcx,c}(\by_c)
 	\ee
 	is the marginal distribution of $ \hat{F}_{\mcx,c}$ with respect to variables $\mcs \subseteq \mct$. Denoting $m_{\mcx,c,t}(y_t) := \min\left\{ \frac{\dx}{2 v_c X_t }\cdot y_t, 1 \right\}$, we can write  from \eqref{eq:CAS_X}:
	\be
		\ba
	    	 	\hat{F}_{\mcx,c,\mcs}(\by_c) &= (1 - \dx) + \dx \min\left\{ m_{\mcx,c,t}(y_t) \right\}_{t\in\mcs} \\
	    	 \ea
 	\ee
    	Continuing, we have
    	\be
		\ba
	    	 	&\pi_{\mcy,c}(\hat{F}_{\mcx,c},F_{\mcy,c}) \\
	    	 	&= 1 - \E_{\hat{F}_{\mcy,c} } \left[ 1- (1-\dx)\sum_{\mcs \subseteq \mct, |\mcs| \geq 1} (-1)^{|\mcs|+1}\right. \\
	    	 	&\left. \quad-\dx  \sum_{\mcs \subseteq \mct, |\mcs| \geq 1} (-1)^{|\mcs|+1}  \min\left\{ m_{\mcx,c,t}(y_t) \right\}_{t\in\mcs} \right] \\
	    	 	&= \E_{\hat{F}_{\mcy,c} } \left[ (1-\dx) \right. \\
	    	 	&\left.\quad+ \dx \sum_{\mcs \subseteq \mct, |\mcs| \geq 1} (-1)^{|\mcs|+1}  \min\left\{ m_{\mcx,c,t}(y_t) \right\}_{t\in\mcs} \right] \\
	    	 	&=  \E_{\hat{F}_{\mcy,c} } \left[ (1-\dx) + \dx \max\left\{ m_{\mcx,c,t}(y_t) \right\}_{t\in\mct} \right] \\
	    	 	&\leq  \E_{\hat{F}_{\mcy,c} } \left[ (1-\dx) + \dx \sum_{t\in\mct}  m_{\mcx,c,t}(y_t)  \right]
	    	 \ea
 	\ee
    	The second equality is due to the Binomial Theorem: $\sum_{k=1}^T \binom{T}{k} (-1)^{k+1} = 1$. The third equality follows from the technical result  Lemma \ref{lem:inc_exc} found in the Appendix. The inequality in the last line above follows from the fact that $\max_{t} \{n_t\} \leq \sum_t n_t$ for any collection of numbers $n_t$. Moreover, since $m_{\mcx,c,t}(y_t) \leq \frac{\dx}{2v_c X_t} y_t$, we can write
    	\be
		\ba
	    	 	\pi_{\mcy,c}(\hat{F}_{\mcx,c},F_{\mcy,c}) &\leq  \E_{\hat{F}_{\mcy,c} } \left[ (1-\dx) + \frac{\dx^2}{2} \sum_{t\in\mct} \frac{y_t}{v_c X_t}  \right] \\
	    	 	&= (1-\dx) + \frac{\dx^2}{2} \sum_{t\in\mct}  \frac{\E_{\hat{F}_{\mcy,c} }[y_{c,t}]}{v_c X_t} \\
	    	 	&\leq (1-\dx) + \frac{\dx^2}{2} \sum_{t\in\mct}  \frac{Y_t}{X_t} \\
	    	 \ea
 	\ee
	The final inequality follows from $F_\mcy \in \F(Y)$ \eqref{eq:ML_admissible}. Summing over all contests to get the total payoff $\pi_{\mcy}(\hat{F}_{\mcx},F_{\mcy}) = \sum_{c\in\mcc} v_c \cdot \pi_{\mcy,c}(\hat{F}_{\mcx,c},F_{\mcy,c})$, we obtain the bound
	\be
		\pi_{\mcy}(\hat{F}_{\mcx},F_{\mcy}) \leq (1-\dx) + \frac{\dx^2}{2} \sum_{t\in\mct}  \frac{Y_t}{X_t},
	\ee
	as the contest values are normalized. We can equivalently express this as a lower bound on player $\mcx$'s payoff,
	\be
		\pi_{\mcx}(\hat{F}_{\mcx},F_{\mcy}) \geq \dx\left(1 - \frac{\dx}{2} \sum_{t\in\mct}  \frac{Y_t}{X_t} \right).
	\ee
\end{proof}

The next Lemma provides the greatest lower bound that player $\mcx$ can ensure by using strategies $\hat{F}_\mcx$ of the form \eqref{eq:CAS_X}.
\begin{lemma}\label{lem:max_LB}
	Consider $\ML(\bX,\bY;W_\WL,\bv)$. Then,
	\be\label{eq:LB_OPT}
		\max_{\dx \in [0,1] } \LB(\dx) = L(\alpha(\bX,\bY)).
	\ee
	The strategy $F_\mcx^* \in \F(\bX)$ that ensures the payoff  $L(\alpha(\bX,\bY))$ is given as follows. 
	If $\alpha(\bX,\bY) \leq 1$, then
    	\be
       		\ba
	            F_{\mcx,c}^*(\bs{u}) &= \min\left\{ \min\left\{ \frac{u_t}{2 v_c X_t }, 1 \right\} \right\}_{t \in \mct} \\
        		\ea
    	\ee
    	for any $c \in \mcc$ and $\bu\in\Rp^T$. If $\alpha(\bX,\bY) > 1$, then
	\be
	        \ba
	            F_{\mcx,c}^*(\bs{u}) &= 1 - \frac{1}{\alpha(\bX,\bY)} \\
	            & +  \frac{1}{\alpha(\bX,\bY)}\cdot \min\left\{ \min\left\{ \frac{u_t}{2 v_c X_t \alpha(\bX,\bY) }, 1 \right\} \right\}_{t \in \mct} \\
	        \ea
	\ee
	for any $c \in \mcc$ and $\bu\in\Rp^T$.
\end{lemma}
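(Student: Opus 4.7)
The plan is to exploit the fact that the lower bound $\LB(\dx) = \dx - \tfrac{\dx^2}{2}\alpha(\bX,\bY)$ established in Lemma \ref{lem:LB} is a concave quadratic in the single scalar parameter $\dx \in [0,1]$. Write $\alpha \triangleq \alpha(\bX,\bY) = \sum_{t\in\mct} Y_t/X_t$ for brevity. Differentiating yields $\LB'(\dx) = 1 - \dx \alpha$, so the unconstrained maximizer is $\dx^\circ = 1/\alpha$, and I will split into the two natural cases according to whether this interior critical point is feasible.

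\textbf{Case 1: $\alpha \leq 1$.} Then $\dx^\circ = 1/\alpha \geq 1$, which lies outside or at the boundary of $[0,1]$. Since $\LB$ is increasing on $[0,\dx^\circ]$, the constrained maximum is attained at the boundary $\dx^* = 1$, giving $\LB(1) = 1 - \tfrac{\alpha}{2} = L(\alpha)$ by definition \eqref{eq:lotto}. Substituting $\dx^* = 1$ into the template \eqref{eq:CAS_X} collapses the leading $1 - \dx$ term and yields precisely the first claimed form of $F_{\mcx,c}^*$.

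\textbf{Case 2: $\alpha > 1$.} Then $\dx^\circ = 1/\alpha \in (0,1)$ lies strictly in the interior, so $\dx^* = 1/\alpha$ is the unconstrained and constrained maximizer. Direct substitution gives $\LB(1/\alpha) = 1/\alpha - \tfrac{1}{2\alpha} = \tfrac{1}{2\alpha} = L(\alpha)$, again matching \eqref{eq:lotto}. Plugging $\dx^* = 1/\alpha$ into \eqref{eq:CAS_X} produces the second claimed form of $F_{\mcx,c}^*$, with the $1/\alpha$ factor appearing both as the mixing probability and inside the per-type scaling $\tfrac{u_t}{2 v_c X_t \alpha}$.

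I do not anticipate a serious obstacle here: the whole argument is a one-variable calculus exercise on a concave quadratic, followed by algebraic substitution into the template \eqref{eq:CAS_X}. The only minor care required is to check that in Case 1 the maximizer is indeed $\dx^* = 1$ (not $1/\alpha$) when $\alpha = 1$, which is immediate since both candidates coincide and both yield the value $L(1) = 1/2$, and to verify that in both cases the resulting $F_{\mcx,c}^*$ remains a valid CDF in $\F_\WL(\bX) \subset \F(\bX)$, which follows directly from Definition \ref{def:CAS_X} since $\dx^* \in [0,1]$ in both branches.
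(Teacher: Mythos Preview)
Your proposal is correct and follows essentially the same approach as the paper: the paper's proof simply states that the optimizer is $\dx^* = \min\{1/\alpha, 1\}$ and that substituting into \eqref{eq:CAS_X} gives $\LB^* = L(\alpha(\bX,\bY))$ and the stated $F_\mcx^*$. Your version is more explicit (carrying out the derivative, the two cases, and the boundary check at $\alpha=1$), but the underlying argument is identical.
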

\begin{proof}
	The maximum lower bound attainable through the choice of $\dx$ is determined by solving the optimization problem
	\be
	    \LB^* = \max_{\dx \in [0,1]} \LB(\dx),
	\ee
	which has the solution
	\be
	    \dx^* = \min\left\{ \frac{1}{\alpha}, 1
	    \right\}.
	\ee
	We then obtain $\LB^* =L(\alpha(\bX,\bY))$, and the form of the strategy $F_\mcx^*$ follows from \eqref{eq:CAS_X}.
\end{proof}

We are now ready to establish the main result, Theorem \ref{thm:WL}.

\begin{proof}[Proof of Theorem \ref{thm:WL}]
	From Lemmas \ref{lem:min_UB} and \ref{lem:max_LB}, we obtain
	\be
		\pi_\mcx(F_\mcx,F_\mcy^*) \leq L(\alpha) \leq \pi_\mcx(F_\mcx^*,F_\mcy) 
	\ee
	for all $F_\mcx \in \F(\bX)$ and $F_\mcy \in \F(\bY)$. They also imply that $\pi_\mcx(F_\mcx^*,F_\mcy^*) = L(\alpha)$. Thus, the strategy profile $(F_\mcx^*,F_\mcy^*)$ is an equilibrium (Definition \ref{def:equil}).
\end{proof}

\section{Conclusion}

The allocation of heterogeneous resources is increasingly becoming a significant aspect for system planners to consider, given the complexity and diversity of objectives that need to be accomplished. In this manuscript, we considered the allocation of multiple and heterogeneous resource types for a strategic adversarial setting known as General Lotto games. We extended the classic and pre-dominantly single-resource General Lotto games considered in the literature to scenarios where players allocate multiple types of resources. Here, the specification of winning rules for each individual contest is the novel modeling consideration in our framework. We derived equilibrium characterizations in the cases where the winning rules are based on weakest-link objectives, and when they are based on a weighted linear combination of the allocated resource types. We then analyzed how  investment decisions should be made when it is costly to utilize each resource type.

\bibliographystyle{IEEEtran}
\bibliography{sources}

 \appendix

\subsection{Technical Result for Theorem \ref{thm:WL}}

The following result is utilized in the proof of Theorem \ref{thm:WL}, specifically in Lemma \ref{lem:LB}.

\begin{lemma}\label{lem:inc_exc}
    Suppose $(z_t)_{t\in \mct} \in \Rp^T$ is a vector of non-negative numbers, where we denote $T = |\mct|$. Then,
    \begin{align*}
        \sum_{S\subseteq \mct, |S|>0}(-1)^{|S|+1}\min_{t\in S} z_t=\max_{t\in T}z_t.
    \end{align*}
\end{lemma}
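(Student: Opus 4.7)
My plan is to recognize this identity as a direct consequence of inclusion–exclusion applied to the event that the maximum exceeds a threshold, combined with the standard integral representation of a non-negative real number.

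First, I would rewrite the max via its indicator representation: since each $z_t \geq 0$,
\begin{equation*}
\max_{t\in\mct} z_t = \int_0^{\infty} \ones\{\max_{t\in\mct} z_t > s\}\, ds.
\end{equation*}
The event $\{\max_{t\in\mct} z_t > s\}$ is the union $\bigcup_{t\in\mct}\{z_t > s\}$, so applying the inclusion–exclusion principle at each threshold $s$ yields
\begin{equation*}
\ones\{\max_{t\in\mct} z_t > s\} = \sum_{S\subseteq\mct,\,|S|\geq 1} (-1)^{|S|+1}\, \ones\{z_t > s,\ \forall t\in S\}.
\end{equation*}
Because $\{z_t > s, \forall t \in S\}=\{\min_{t\in S} z_t > s\}$, I would then integrate termwise (legitimate since the sum is finite) and invoke the same identity $\min_{t\in S}z_t=\int_0^{\infty}\ones\{\min_{t\in S} z_t > s\}\, ds$ on each summand, which immediately produces the claimed equality.

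As a sanity check / backup, I would also outline a purely combinatorial argument: relabel so that $z_1\geq z_2\geq\cdots\geq z_T$, so that $\min_{t\in S}z_t = z_{\max S}$. Grouping by $k:=\max S$ gives
\begin{equation*}
\sum_{S\subseteq\mct,\,|S|\geq 1}(-1)^{|S|+1}\min_{t\in S} z_t=\sum_{k=1}^{T} z_k \sum_{j=0}^{k-1}\binom{k-1}{j}(-1)^{j},
\end{equation*}
and the binomial identity $\sum_{j=0}^{k-1}\binom{k-1}{j}(-1)^{j}=(1-1)^{k-1}$ collapses the inner sum to $1$ when $k=1$ and to $0$ otherwise, leaving $z_1=\max_t z_t$.

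The statement is clean enough that there is no serious obstacle; the only subtlety worth being careful about is the validity of interchanging sum and integral, which is trivial here because $\mct$ is finite. I would present the integral/inclusion–exclusion argument as the main proof, since it is the shortest and most transparent.
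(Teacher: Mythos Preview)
Your proposal is correct; both arguments you sketch are complete and rigorous. The paper, however, proceeds differently: it proves the identity by induction on $T$, showing that the sum over subsets of $[T+1]$ containing the smaller of $z_T,z_{T+1}$ but not the larger cancels against the sum over subsets containing both, leaving only the sum over subsets of $[T+1]\setminus\{m\}$, to which the inductive hypothesis applies. Your integral-representation argument is more conceptual (it makes visible that this is literally inclusion--exclusion for the union $\bigcup_t\{z_t>s\}$, integrated in $s$), while your sorting-and-grouping argument is the most elementary and self-contained of the three, avoiding both induction and any auxiliary integral. The paper's inductive proof is longer and somewhat less transparent, but has the minor virtue of not requiring a relabeling or an integral identity; all three are valid and any would serve.
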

\begin{proof}
    We prove the lemma by induction on $T$, the size of the index set $\mct$. The case $T=1$ trivially holds. Therefore, suppose that the statement is correct for arbitrary $T$. We want to prove that it is correct for $T+1$. Let $m=\min\{z_T,z_{T+1}\}$ and $M=\max\{z_T,z_{T+1}\}$. We have
    \begin{align*}
        \sum_{S\subseteq [T+1]}(-1)^{|S|+1}\min_{t\in S} z_t&=\sum_{\substack{S\subseteq [T+1]\\m\in S, M\not\in S}}(-1)^{|S|+1}\min_{t\in S} z_t\cr
        &+\sum_{\substack{S\subseteq [T+1]\\m\not\in S, M\in S}}(-1)^{|S|+1}\min_{t\in S} z_t\cr
        &+\sum_{\substack{S\subseteq [T+1]\\m\in S, M\in S}}(-1)^{|S|+1}\min_{t\in S} z_t.
    \end{align*}
    Note that for every $S\subseteq [T+1]$ such that $m\in S, M\not\in S$, we have the set $S\cup\{M\}\subseteq [T+1]$. Also, for every $S\subseteq [T+1]$ such that $m\in S, M\in S$, we have the set $S\setminus\{M\}\subseteq [T+1]$. Therefore, since 
    \begin{align*}
        (-1)^{|S|+1}\min_{t\in S} z_t+(-1)^{|S|+2}\min_{t\in S\cup\{M\}} z_t=0
    \end{align*}
    where $m\in S, M\not\in S$,
    we have
    \begin{align*}
        \sum_{\substack{S\subseteq [T+1]\\m\in S, M\not\in S}}&(-1)^{|S|+1}\min_{t\in S} z_t\cr
        &+\sum_{\substack{S\subseteq [T+1]\\m\in S, M\in S}}(-1)^{|S|+1}\min_{t\in S} z_t=0.
    \end{align*}
    Thus, we obtain
    \begin{align*}
        \sum_{S\subseteq [T+1]}(-1)^{|S|+1}\min_{t\in S} z_t&=
        \sum_{\substack{S\subseteq [T+1]\\m\not\in S, M\in S}}(-1)^{|S|+1}\min_{t\in S} z_t\cr
        &=\sum_{\substack{S\subseteq [T+1]\setminus\{m\}}}(-1)^{|S|+1}\min_{t\in S} z_t\cr
        &=\max\{z_1,\ldots,z_{T-1},M\}\cr
        &=\max_{t\in[T+1]}z_t,
    \end{align*}
    where the third equality follows from the induction assumption on sets of size $T$.
\end{proof}

\subsection{Proof of Lemma \ref{lem:sunk}}

Consider the game $\MLM(M_\mcx, M_\mcy,\bs{\costx},\bs{\costy},W_\WL)$. The best-response problem for player $\mcx$ is
\be
	\ba
		\max_{\bX} L(\alpha(\bX,\bY)) \text{ s.t. }
		\begin{cases}
			&\sum_{t\in\mct} \costx_t X_t \leq M_\mcx \\
			&X_t \geq 0, \ \forall t \in \mct \\
		\end{cases}
	\ea
\ee
with $\bY$ fixed. The necessary condition is
\be
	L'(\alpha(\bX,\bY))\cdot \left(\frac{Y_t}{X_t^2} \right) + \lambda \costx_t - \lambda_t = 0, \ \forall t \in \mct.
\ee
where $\lambda \geq 0$ is the multiplier associated with the monetary budget constraint, and $\lambda_t$ is the multiplier associated with the constraint $-X_t \leq 0$. We first observe that any optimal response must have $X_t > 0$, for otherwise the payoff is zero. Moreover, the payoff is strictly increasing in $X_t$ for any $t\in\mct$, so an optimal response must utilize the entire monetary budget $M_\mcx$. Thus, $\lambda_t = 0$ for all $t \in \mct$ and $\lambda > 0$. We  then get
\be
	X_t = \sqrt{\frac{-L'(\alpha(\bX,\bY)) Y_t}{\lambda \cdot \costx_t}}
\ee
We can then write the budget constraint as $M_\mcx =  \sqrt{\frac{-L'(\alpha(\bX,\bY))}{\lambda}} \sum_{t\in\mct} \sqrt{\frac{Y_t}{\costx_t}}$. The multiplier is evaluated to be 
\be
	\lambda = \frac{-L'(\alpha(\bX,\bY))}{M_\mcx^2} \sum_{t\in\mct} \sqrt{\frac{Y_t}{\costx_t}}.
\ee
Using this expression, the optimal allocation can then be calculated to be
\be\label{eq:Xstar}
	X_t^* = \frac{M_\mcx}{\costx_t} \frac{\sqrt{\costx_t Y_t}}{ \sum_{s\in\mct} \sqrt{\costx_s Y_s} }.
\ee

To find the equilibrium, we can now consider the best-response problem for player $\mcy$ as
\be
	\ba
		\min_{\bY} L(\alpha(\bX^*,\bY)) \text{ s.t. }
		\begin{cases}
			&\sum_{t\in\mct} \costy_t Y_t \leq M_\mcy \\
			&Y_t \geq 0, \ \forall t \in \mct \\
		\end{cases}.
	\ea
\ee
where $\bX^*$ is given in \eqref{eq:Xstar}. The necessary condition is
\be
	L'(\alpha(\bX^*,\bY))\cdot \left(\frac{1}{X_t^*} \right) + \lambda \costy_t - \lambda_t = 0, \ \forall t \in \mct.
\ee
Using a change of variable $Z_t := \sqrt{\costx_t Y_t}$, we obtain
\be\label{eq:zt_lambda}
	Z_t = \frac{-L'(\alpha(\bX^*,\bY)) }{M_\mcx \lambda } \frac{\costx_t}{\costy_t}  \sum_{s\in\mct} Z_s
\ee
The budget constraint requires $\sum_{t\in\mct} \costy_t \frac{Z_t^2}{\costx_t} = M_\mcy$. Using the expression \eqref{eq:zt_lambda}, we can evaluate the multiplier to be
\be
	\lambda = \frac{-L'(\alpha(\bX^*,\bY))}{M_\mcx \sqrt{M_\mcy}} \left( \sum_{s\in\mct} Z_s \right) \sqrt{r }
\ee
where $r = \sum_{s\in\mct} \frac{\costx_s}{\costy_s}$. We then obtain
\be
	Z_t = \sqrt{\frac{M_\mcy}{r}} \cdot \frac{\costx_t}{\costy_t}.
\ee
The optimal allocation $Y_t^*$ is recovered via 
\be
	Y_t^* = \frac{Z_t^2}{\costx_t} =  \frac{M_\mcy}{\costy_t} \frac{\costx_t/\costy_t}{r}.
\ee
The optimal allocation of player $\mcx$ can then be written as
\be
	X_t^* = \frac{M_\mcx}{\costx_t} \frac{\costx_t/\costy_t}{r}.
\ee

\begin{IEEEbiography}[{\includegraphics[width=1in,height=1.25in,clip,keepaspectratio]{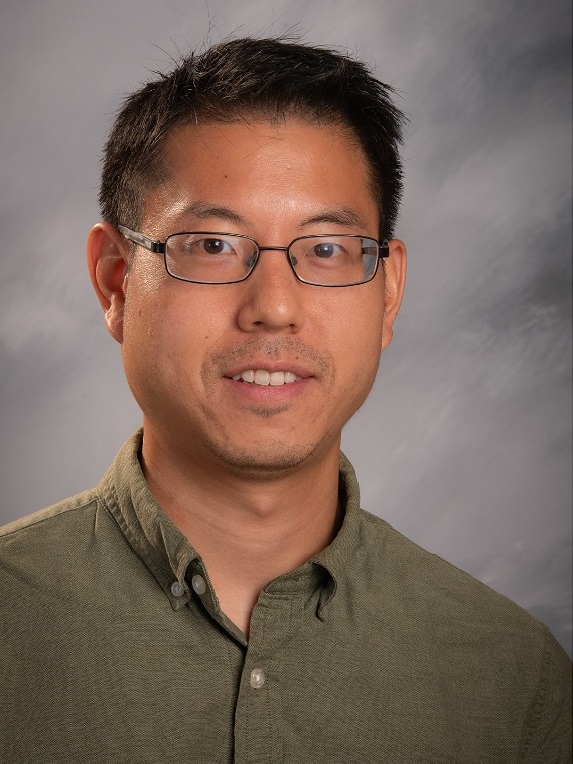}}] {Keith Paarporn}
  is an Assistant Professor in the Department of Computer Science at the University of Colorado, Colorado Springs. He received a B.S. in Electrical Engineering from the University of Maryland, College Park in 2013, an M.S. in Electrical and Computer Engineering from the Georgia Institute of Technology in 2016, and a Ph.D. in Electrical and Computer Engineering from the Georgia Institute of Technology in 2018. From 2018 to 2022, he was a postdoctoral scholar in the Electrical and Computer Engineering Department at the University of California, Santa Barbara. His research interests include game theory, control theory, and their applications to multi-agent systems and security. 
\end{IEEEbiography}

\begin{IEEEbiography}[{\includegraphics[width=1in,height=1.25in,clip,keepaspectratio]{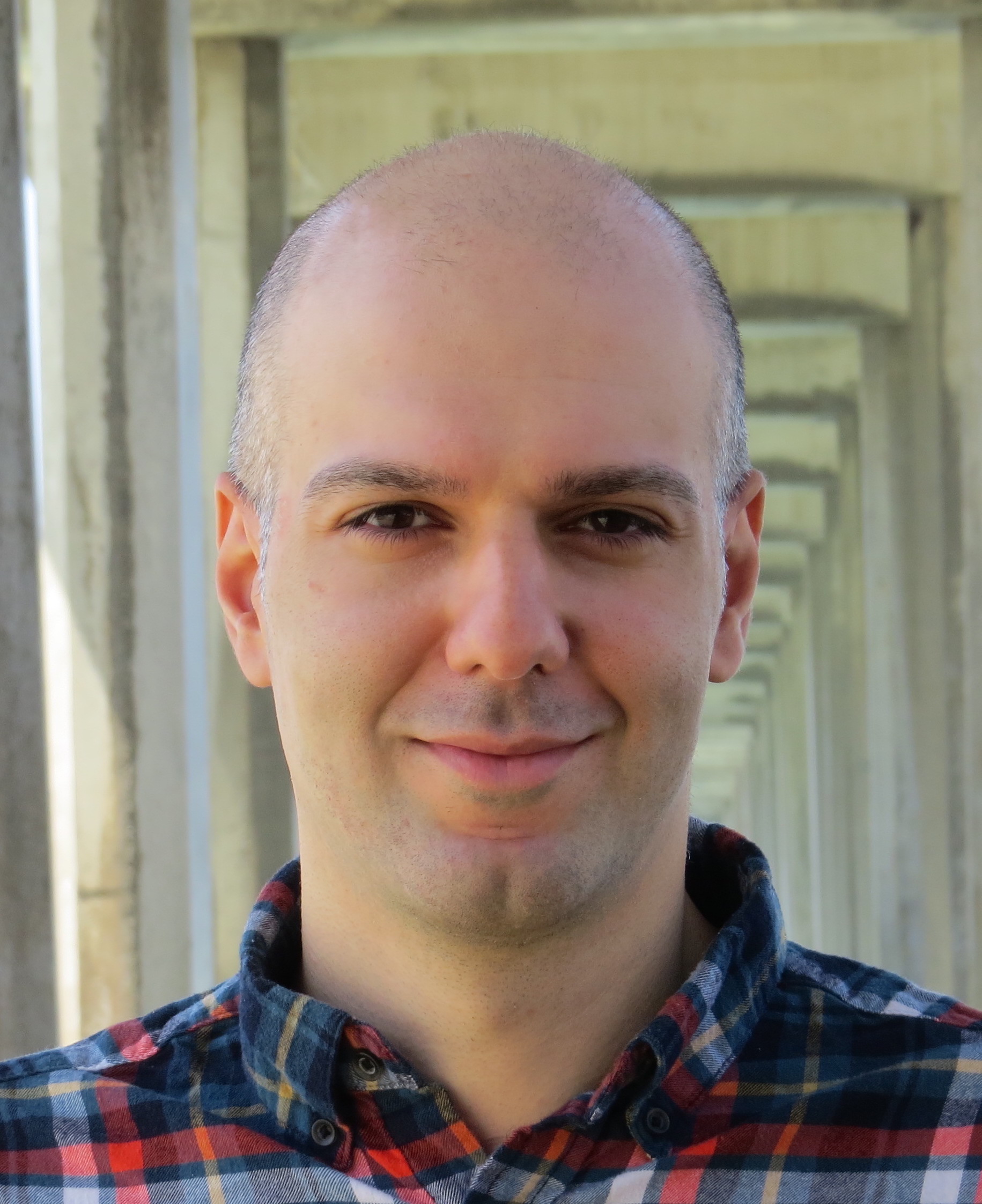}}] {Adel Aghajan}  received a B.Sc., and M.Sc. degrees in Electrical Engineering
 from Isfahan University of Technology, Isfahan, Iran, in 2009, and 2012,
respectively, and  a Ph.D.\ degree in Electrical Engineering at University of California, San Diego, in 2021. He is currently a Postdoctoral Researcher with the
Department of Electrical and Computer Engineering at the University of California, San Diego. His research interests include game theory, distributed computation
and optimization
, random dynamics, and information theory. 
    
\end{IEEEbiography}

\begin{IEEEbiography}[{\includegraphics[width=1in,height=1.25in,clip,keepaspectratio]{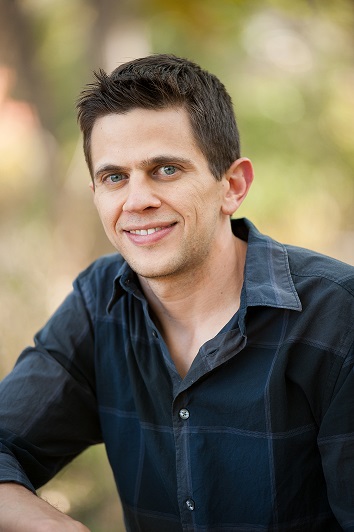}}] {Jason Marden}
    is a Professor in the Department of Electrical and Computer Engineering at
    the University of California, Santa Barbara. Jason received a BS in Mechanical Engineering in 2001 from
    UCLA, and a PhD in Mechanical Engineering in
    2007, also from UCLA.
    After graduating from UCLA, he served as a junior
    fellow in the Social and Information Sciences Laboratory at the California Institute of Technology until
    2010 when he joined the University of Colorado. In 2015, Jason joined the Department of Electrical and Computer Engineering at UCSB.  Jason is a recipient of the NSF Career Award (2014), the ONR Young Investigator Award (2015), the AFOSR Young Investigator Award (2012), the American Automatic Control Council Donald P. Eckman Award (2012), the SIAG/CST Best SICON Paper Prize (2015), and was named an IEEE Fellow (2023). Jason’s research interests focus on game theoretic methods for the control of distributed multiagent systems.
\end{IEEEbiography}

\end{document}